\definecolor{duc}{rgb}{0.0, 0.38, 0.67}	
\definecolor{gcolor}{rgb}{0.9, 0.17, 0.31}
\newcommand{\aprob}{\ensuremath{{\varrho}_\CA}\xspace}
\newcommand{\hprob}{\ensuremath{{\varrho}_H}\xspace}
\newcommand{\prob}{\ensuremath{{\varrho}}\xspace}
\newcommand{\dprob}{\ensuremath{{\varrho}}\xspace}
\newcommand{\powork}{\text{PoW}\xspace}
\newcommand{\postake}{\text{PoS}\xspace}
\newcommand{\PoStake}{\text{Proof-of-Stake}\xspace}
\newcommand{\postorage}{\text{Proof-of-Storage}\xspace}
\newcommand{\PoStorage}{\text{Proof-of-Storage}\xspace}
\newcommand{\allocator}{\op{RA}\xspace}
\newcommand{\gR}{\ensuremath{\allocator_\mathsf{res}}\xspace}
\newcommand{\powR}{\ensuremath{\allocator_\mathsf{pow}}\xspace}
\newcommand{\stakeR}{\ensuremath{\allocator_\mathsf{pos}}\xspace}
\newcommand{\storageR}{\ensuremath{\allocator_\mathsf{space}}\xspace}
\newcommand{\cost}{\op{Cost}\xspace}
\def \ifempty#1{\def\temp{#1} \ifx\temp\empty }
\newcommand{\str}[1]{\textsc{#1}}
\newcommand{\var}[1]{\textit{#1}}
\newcommand{\op}[1]{\textsl{#1}}
\newcommand{\msg}[2]{\ensuremath{\ifempty{#2} [\str{#1}] \else [\str{#1}, {#2}] \fi}}
\newcommand{\becomes}{\ensuremath{\leftarrow}}
\newcommand{\false}{\textsc{false}\xspace}
\newcommand{\true}{\textsc{true}\xspace}
\newcommand{\etal}{\emph{et al.}}
\newcommand{\BC}{\ensuremath{\mathbb{C}}\xspace}
\newcommand{\BN}{\ensuremath{\mathbb{N}}\xspace}
\newcommand{\BP}{\ensuremath{\mathbb{P}}\xspace}
\newcommand{\CA}{\ensuremath{\mathcal{A}}\xspace}
\newcommand{\CB}{\ensuremath{\mathcal{B}}\xspace}
\newcommand{\CC}{\ensuremath{\mathcal{C}}\xspace}
\newcommand{\CD}{\ensuremath{\mathcal{D}}\xspace}
\newcommand{\CP}{\ensuremath{\mathcal{P}}\xspace}
\newcommand{\localBlocks}{\ensuremath{\mathcal{B}}\xspace}
\newcommand{\chain}{\ensuremath{\mathcal{C}}\xspace}
\newcommand{\prechain}{\ensuremath{\mathcal{C}_{\mathsf{prefix}}}\xspace}
\newcommand{\bstate}{\ensuremath{\var{st}}\xspace}
\newcommand{\slot}{\ensuremath{\var{sl}}\xspace}
\newcommand{\stakeDis}{\ensuremath{\mathbb{S}_{\mathsf{stake}}}\xspace}
\newcommand{\spaceDis}{\ensuremath{\mathbb{S}_{\mathsf{space}}}\xspace}
\newcommand{\lchain}{\ensuremath{\mathcal{C}_{\mathsf{local}}}\xspace}
\newcommand{\btx}{\ensuremath{B_{\mathsf{com}}}\xspace}
\newcommand{\pparagraph}[1]{\smallskip \noindent \textbf{#1.}}
\newcommand{\fun}[1]{\ensuremath{\op{#1}}}
\newcommand{\sample}{\ensuremath{\xleftarrow{R}}\xspace}
\newcommand{\PP}{\ensuremath{\mathbb{P}\xspace}}
\newcommand{\tx}{\mathsf{tx}\xspace}
\newcommand{\set}[1]{\ensuremath{\{#1\}}\xspace}
\newcommand{\sign}{\fun{Sign}\xspace}
\newcommand{\verify}{\fun{Verify}\xspace}
\newcommand{\bset}{\ensuremath{\{0,1\}}}
\newtheoremstyle{plain-boldhead}
  {\topsep}
  {\topsep}
  {\itshape}
  {}
  {\bfseries}
  {.}
  { }
  {\thmname{#1}\thmnumber{ #2}\thmnote{ (\bfseries #3)}}
\newtheoremstyle{definition-boldhead}
  {\topsep}
  {\topsep}
  {\normalfont}
  {}
  {\bfseries}
  {.}
  { }
  {\thmname{#1}\thmnumber{ #2}\thmnote{ (\bfseries #3)}}
\theoremstyle{plain-boldhead}
\newtheorem{theorem}{Theorem}
\newtheorem{proposition}[theorem]{Proposition}
\newtheorem{lemma}[theorem]{Lemma}
\theoremstyle{definition-boldhead}
\newtheorem{definition}{Definition}
\newtheorem{remark}{Remark}
\begin{document}
\title{\bf Modeling Resources in Permissionless Longest-chain Total-order Broadcast}
\author{Sarah Azouvi\\
  Protocol Labs
  \and Christian Cachin\\
  University of Bern
%
  \and Duc V. Le\\
  University of Bern
  %
  \and Marko Vukoli\'c\\
  Protocol Labs
  \and Luca Zanolini\\
   University of Bern
}

\maketitle

\begin{abstract}\noindent
Blockchain protocols implement total-order broadcast in a
permissionless setting, where processes can freely join and leave.
In such a setting, to safeguard against Sybil attacks, correct processes rely
on cryptographic proofs tied to a particular type of \emph{resource}
to make them eligible to order transactions. For example, in the case of
Proof-of-Work (PoW), this
resource is computation, and the proof is a solution to a computationally hard
puzzle. Conversely, in Proof-of-Stake (PoS), the resource corresponds
to the number of coins that every process in the system owns, and a
secure lottery selects a process for participation proportionally to its
coin holdings.

Although many resource-based blockchain protocols are formally proven
secure in the literature, the existing security proofs fail to demonstrate
why particular types of resources cause the blockchain protocols to be
vulnerable to distinct classes of attacks. For instance, PoS systems are
more vulnerable to long-range attacks, where an adversary corrupts past
processes to re-write the history, than \powork and \PoStorage systems.
Proof-of-Storage-based and \postake-based protocols are both more
susceptible to private double-spending attacks than \powork-based
protocols; in this case, an adversary mines its chain in secret without
sharing its blocks with the rest of the processes until the end of the
attack.

In this paper, we formally characterize the properties of resources through an
abstraction called \emph{resource allocator} and
give a framework for understanding longest-chain consensus protocols based on
different underlying resources. In addition, we use this resource allocator to 
demonstrate security trade-offs
between various resources focusing on well-known attacks (e.g., the long-range
attack and nothing-at-stake attacks).

\end{abstract}

\newpage

\section{Introduction}
\label{sec:intro}

Permissionless consensus protocols are open for everyone to participate and often rely on a \emph{resource} to protect against Sybil attacks. 
In the case of Proof-of-Work (PoW), this resource is computation: A computational puzzle must be solved in order to gain writing rights in the system.
In contrast, in a Proof-of-Stake (PoS) system, writing access is granted using a form of lottery where participants are elected proportionally to the number of coins they own.
Other resource-based systems, such as \postorage, have also appeared. Participants are elected proportionally to the number of resources they commit to the system, and hence this commitment must be publicly verifiable. 
Different resources present different trade-offs. For example, \postake is much more energy-efficient than \powork but presents many additional vulnerabilities~\cite{DBLP:conf/ec/Brown-CohenNPW19}.
Comparing the security of protocols based on multiple resource types is a non-trivial task, as they all operate under different assumptions and frameworks. 

In this paper, we provide a common framework to formally compare consensus protocols based on different underlying resources. 
We only consider longest-chain protocols~\cite{DBLP:conf/ccs/DemboKTTVWZ20} based on an underlying resource, as we want to highlight the properties affected by relying varying the resource for the same consensus method.
In future work, our framework could be used to model further approaches to ensure consensus, such as well-known BFT protocols~\cite{DBLP:books/daglib/0025983}, for instance. In longest-chain protocols, one participant is elected at each time step, on expectation, proportionally to their amount of resource and that participant gets to write to the append-only database by adding a \emph{block} containing all the necessary data to the longest chain of blocks.

We also explore known attacks in this work. The first one is the long-range attack. 
In a long-range attack, an adversary corrupts processes that used to participate in the system but that no longer hold any resources.
Moreover, we investigate nothing-at-stake attacks, where processes mine on multiple chains at the same time, and private attacks, where an adversary mines on its own chain without contributing to the honest chain.
We are interested in quantifying gain or loss of security with different resources.
It has already been shown that, when considering longest-chain protocols, \postake is less secure than \powork. We furthermore show that \postorage stands in the middle, as storage is not virtual (like stake), but is reusable (unlike the computation of \powork).

We start the paper by providing a formal framework in which protocols based on different resources can meaningfully be compared.
We differentiate between virtual and external resources to highlight which properties make longest-chain \postake and \postorage less secure than \powork, although they both present trade-offs when it comes to their efficiency.

\pparagraph{Contributions} Our contributions can be summarized as follows.
\begin{itemize}
    \item We formally characterize the properties of resources through an
abstraction called \emph{resource allocator} and formally define properties for a \emph{secure} resource allocator.
    \item We concretely define different resource allocator abstractions, each one for every type of resource used in popular blockchain protocols, namely, computation, stake, and storage. 
    \item We present an algorithm that, when instantiated with different resource allocators, leads to a generalization of existing protocols such as Nakamoto consensus, Ouroboros Praos, and Filecoin's consensus protocol. We also formally show this generalization implements total-ordered broadcast under a fixed total resource and permissionless setting. 
    \item We demonstrate how different resources lead to different security trade-offs by leveraging our model to explain long-range attacks against virtual resources and attacks related to the nothing-at-stake nature of reusable resources.
\end{itemize}

\pparagraph{Related Work}
Since the emergence of Bitcoin in 2008, the academic community has
developed a number of frameworks~\cite{DBLP:conf/eurocrypt/GarayKL15,DBLP:journals/iacr/Ren19,DBLP:conf/ccs/GaziKR20,DBLP:conf/ccs/DemboKTTVWZ20} 
for studying the safety and liveness properties of its Nakamoto consensus protocol.
These studies also established a strong foundation for the development of  blockchain 
protocols based on more eco-friendly types of resources, such as stake and storage. 
However, despite the fact that all resource-based blockchains have been formally proven to be secure, they failed to explain why certain properties of resources make some blockchain protocols more susceptible to particular types of attacks than others. 
To the best of our knowledge, no prior work has attempted to formally study the properties of underlying resources, and our work aims to fill this gap.

Lewis-Pye and Roughgarden~\cite{DBLP:journals/corr/abs-2101-07095} present the concept of
a \emph{resource pool} that reflects the resource balance of processes in the system
at any time, and they use a \emph{permitter} together with the resource pool to abstract away the leader selection procedure. 
Using this formalization, they demonstrate two crucial impossibility results for permissionless systems. 
Two main results of their work are:
\emph{(i)} no permissionless, deterministic, and decentralized protocol solves the Byzantine Agreement problem in a synchronous setting, 
and \emph{(ii)} no permissionless and {probabilistic} protocols solve the Byzantine Agreement problem in the unsized setting (in which the total number of resources is unknown) with partially synchronous communication.
However, their work could not capture several aspects of underlying resources used in blockchain protocols;
therefore, their work did not demonstrate long-range attacks against virtual resources such as stake, and
the cost of several other attacks on \emph{reusable} resources.
Our work takes a similar approach of abstracting away the leader selection process with 
a resource allocator (c.f., \cref{sec:modeling_resources}), and 
we further formalize the properties of resources through the interactions between the process and this allocator. 
With this formalization, we prove how permissionless and {probabilistic}
blockchain protocols guarantee properties of a total-order broadcast in a
synchronous setting and demonstrate
various attacks against \emph{virtual} or \emph{reusable} resources.

Terner~\cite{DBLP:conf/fc/Terner22} also investigates how to abstract
resources used in permissionless blockchains. While their work outlines
several essential properties of resources and studies how the resource
generation rate affects the standard properties (i.e., consistency and
liveness) of robust transaction ledger, their study does not characterize the
properties of the underlying resources used in permissionless blockchain
protocols. Consequently, their model fails to explain why distinct types of
resources render blockchain protocols vulnerable to different attacks (e.g.,
long-range attacks and private attacks).

\section{Model and Definitions}%
\label{sec:model_and_definitions}

\subsection{System Model}%
\label{sub:system_model}
\pparagraph{Time} We assume that the protocol proceeds in \emph{time steps} and define a time step to be a value in $\BN$. Moreover, we consider $0$ as starting time step of protocol execution.

\pparagraph{Processes} 
{We consider a system consisting of a set of \emph{processes}, $\CP =\{p_1, p_2,
\ldots \}$}. 
Processes interact with each other
through exchanging messages. A protocol for $\CP$ consists of a collection of
programs with instructions for all processes. 
Moreover, to capture the permissionless nature of various blockchain protocols, processes can join
the system at any time.
we denote, $\CP_{\leq t}$, the set of all processes that have participated in the
protocol before the time step $t$. Hence, $\CP_{t} \subseteq \CP_{t'}$
for all $t\leq t'$.
{At the beginning of each time step, a process becomes \emph{activated}, and it starts to follow a deterministic protocol. This includes processing any messages that may have arrived from other processes. Once done, it becomes \emph{deactivated}. We assume that the activation period of a process $p_i$ starts at the time step $t$ and ends before time step $t+1$.}

\pparagraph{Communication} 
{We assume there is a low-level primitive for sending messages over point-to-point links between each pair of processes {that know of each other}, as well as a probabilistic broadcast primitive~\cite{DBLP:books/daglib/0025983}. Point-to-point messages are authenticated and delivered reliably among correct processes.
In probabilistic broadcast, correct processes \op{gossip-deliver} and \op{gossip-broadcast} messages with an overwhelming probability, no message is delivered more than once, and no message is created or corrupted by the network.}

\pparagraph{Network Delay} We denote by {$\Delta \in \mathbb{N}$} with $\Delta \ge 1$ the maximum network delay~\cite{DBLP:journals/jacm/DworkLS88}. 
Namely, if a correct process \op{gossip-broadcasts} 
a message $m$ at a time step
$t$, then other processes will have \op{gossip-delivered} or received over the message by the beginning
of a time step $t+\Delta$ with an overwhelming probability.

\pparagraph{Idealized Digital Signature} 
A digital signature scheme, $\Sigma$, consists of two
operations, $\sign(\cdot, \cdot)$ and $\verify(\cdot, \cdot, \cdot)$.
\sloppy{The operation $\sign(p_i, \cdot)$ invoked by $p_i$ takes  $m \in
\bset^*$ as input and returns a signature $\sigma \in \bset^*$.}
Only $p_i$ can invoke $\sign(p_i, \cdot)$.
The operation $\verify(p_i, \cdot, \cdot)$ takes as input a signature, $\sigma$,
and a message $m$;  $\verify(p_i, \cdot, \cdot)$ returns \true for any
$p_i \in \mathcal{P}$ and $m\in \bset^*$ if and only if $p_i$ has invoked 
 $\sign(p_i, m)$ and obtained  $\sigma$ before. Any process can invoke $\verify (\cdot, \cdot, \cdot)$.

\pparagraph{Random Oracle} 
All hash functions are modeled as a random oracle, $H$, that can be queried by any process.
$H$ takes as input a bit string $x\in \bset^*$ and returns a uniformly random
string from $\bset^\lambda$ where $\lambda$
is the security parameter. 
Also, upon repeated queries, $H$ always outputs the same answer.

\subsection{Modeling Blockchain Data Structures}
\label{sub:model-blockchain}

\pparagraph{Blocks} We use $\tx$ to denote a \emph{transaction}. 
We write $\overline{\tx} = [\tx_1, \dots, \tx_m]$ to denote a list of transactions. A \emph{block} is $B = (h, \overline{\tx}, \pi, \sigma_i)$, 
where $h$ is a hash value, 
$\overline{\tx}$ is a list of transactions, 
$\pi$ is a resource commitment proof (cf. Section~\ref{sec:modeling_resources})
and $\sigma_i$ is a signature on $(h,\overline{\tx}, \pi)$. 
In this work, we assume that blocks are signed. In this way, we can abstract away the notion of \emph{coinbase} transactions, i.e., the first transaction in a block, created by a miner, and used to collect the block reward.
{Finally, we denote with $B_0 = (\bot,
\overline{\tx}, \bot, \bot)$ the {\em genesis block}.}

\pparagraph{Blockchain} A \emph{blockchain} $\chain = [B_0, B_1, \dots]$ with
respect to the genesis block $B_{0}$ is a chain of blocks forming a hash chain
such that $h_j = H(B_{j-1})$ {for $h_j \in B_j$ for $j = 1, 2, \dots$ with $B_j = (h_j, \bar{\tx_j}, \pi_j, \sigma_j)$}. 
For a blockchain $\chain$, we use $\chain[-k]$ to denote the last $k$-th block in $\chain$, 
let $\chain[k]$ to denote block $B_k$ (i.e., block at height $k$), 
and write $\chain[:-k]$ to denote
the first $|\chain|-k$ blocks. 
$|\chain|$ denotes the length of $\chain$.
We write $\chain \preceq \chain'$ when $\chain$ is a prefix $\chain'$.
We use $\chain^{t}$ to denote the blockchain at time step $t$. 
For two time steps, $t_1$ and $t_2$, $\chain^{t_2}/\chain^{t_1}$ is a set of blocks that is in $\chain^{t_2}$
but not in $\chain^{t_1}$.

\pparagraph{State} 
{The blockchain \emph{state} \bstate specifies different information
of the underlying blockchain protocol, e.g.,  
the stake distribution of each process, the block information, such as timestamps, 
as well as contract local states.  
The blockchain state \bstate can be reconstructed by executing transactions  
included in a blockchain $\chain$. Without loss of generality, we define
the state to be the blockchain, $\bstate = \chain$. 
Also, we write $\bstate = (\chain, B)$ to indicate that a block $B$ is potentially appended to $\chain$.}

\pparagraph{Validity} We introduce the notion of {\em validity} for transactions and
blockchains to capture the fact that only ``valid'' transactions are {delivered}. 
More importantly, for all blockchain protocols, the decision on the validity is determined locally by all processes. Because of this, we define the validity as follows.
A transaction $x$ is \emph{valid} with respect to $\chain$
if $\tx$ satisfies a \emph{validation predicate} $\mathbb{P}(\chain, \cdot)$
locally known to all processes (i.e., $\PP(\chain, [\tx]) = \true$). 
We also use $\PP(\chain, \overline{\tx})=\true$ to indicate 
{that the sequence of transactions in $\overline{\tx}$ is valid
(i.e., does not consume the same output in
Bitcoin or the same nonce in Ethereum), and we define $\BP(\chain, [~])$ to be
$\true$}.
Depending on the blockchain protocol, a \emph{valid block} $B$ issued by
$p_i$ should consist of 
a valid signature issued by $p_i$, a valid ``proof'' $\pi$
for a so-called resource commitment that we introduce in  Section ~\ref{sec:modeling_resources}
and valid transactions with respect to $\chain$ such that $\chain[-1]=B$,
(i.e., $\BP(\chain, \overline{\tx})=\true$ for $\overline{\tx} \in B$).
Finally, \emph{valid blockchains} are chains that consist of 
only valid blocks
and start from the genesis block $B_0$.

\subsection{Total-order Broadcast}
{We will show that the blockchain protocols considered here guarantee the following properties of total-order broadcast in a permissionless setting.}
In particular, total-order broadcast ensures that all processes deliver the same set of
transactions in a common global order.
In total-order broadcast,  every process broadcasts a transaction by invoking
$\op{a-broadcasts}(\tx)$. 
The broadcast primitive outputs a transaction $\tx$ through an $\op{a-deliver}(\tx)$
event.  
{In this model, we do not distinguish between a process and a client. 
A client can be considered as a process that only broadcast transactions and does not
participate in mining.}

\begin{definition}[Total-order Broadcast]
\label{def:tob}

A protocol for total-order broadcast satisfies the following properties.
  \begin{description}[style=unboxed]
      \item[Validity] If a correct process, $p_i$ \op{a-broadcasts} a valid
        transaction $\tx$ according to $\mathbb{P}(\cdot, \cdot)$ (i.e., {the
        validation predicate defined in~\Cref{sec:model_and_definitions}}),
        then $p_i$ eventually \op{a-delivers} $\tx$ with an overwhelming probability.
      \item[No duplication] No correct process \op{a-delivers} the same
        transaction $\tx$ more than once.
      \item[Agreement] If a transaction $\tx$ is \op{a-delivered} by some correct
        process, then with an overwhelming probability $\tx$ is eventually
        \op{a-delivered} by every correct process.
      \item[Total order] Let $\tx_1$ and $\tx_2$ be any two transactions, and suppose
        $p_i$ and $p_j$ are any two correct processes that \op{a-deliver}
        $\tx_1$ and $\tx_2$. If $p_i$ \op{a-delivers} $\tx_1$ before $\tx_2$, then
        with an overwhelming probability, 
        $p_j$ \op{a-delivers} $\tx_1$ before $\tx_2$.
  \end{description}
\end{definition}

\section{Modeling Resources in Blockchain}
\label{sec:modeling_resources}
{In this section, we model resources, formalize their properties through the abstraction of a \emph{resource allocator}, and state our threat assumptions. {The definition of a \emph{resource allocator} in this section is only syntactic; security and liveness properties of the resource allocator are defined in~\Cref{sec:nakamoto_consensus}}}. 
\begin{definition}[Resource Budget]
  A resource budget $r$ is a value in \BN. 
  At any given time, each process $p_i$ has a resource budget $r_i$. 
  In particular, 
  {
  there exists a function $\fun{Alloc}:\mathcal{P}\times \BN
  \rightarrow \mathbb{N}$ that takes as input a process $p_i$ and
  a time step $t$, outputs the resource budget of a process at time step $t$.
  }
  We define $R$ to be the fixed  resource budget
  existing in the system. 
\end{definition}
The definition of a fixed resource budget and the resource allocation function can be viewed as the sized setting and the resource pool definition in Lewis-Pye and Roughgarden framework~\cite{DBLP:journals/corr/abs-2101-07095}.
We note that the specification of the resource budget varies depending on protocols;
e.g., for \powork, we define the budget to be a number of hash
function evaluations per time step. 
We now define \emph{resource allocator}, an abstraction that will allow us to
reason about different resources. 
\begin{definition}[Resource Allocator] 
  A resource allocator, $\allocator$, interacts with the processes through input events (\op{\allocator-commit},
  \op{\allocator-validate}) and output events (\op{\allocator-assign}, \op{\allocator-is-committed}):
  \begin{itemize}
    \item  
    \op{\allocator-commit($p_i, \bstate, r$)}:
    At time step $t$, every process $p_i$ may request a resource
    commitment $\pi$ from the resource allocator by invoking
    $\op{\allocator-commit}$ on inputs a state \bstate and {a resource budget $0 \leq r
    \leq \fun{Alloc}(p_i,t)$, i.e., $p_i$ does not $\op{\allocator-commit}$ more resources than it possesses}. 
    {At the end of {the activation period of~$p_i$}}, the resource allocator either assigns a resource commitment $\pi$ 
    and {a resource budget $r$} to process $p_i$ through an $\op{\allocator-assign}(p_i,\bstate, r, \pi)$ event
    \sloppy{or assigns an empty value $\bot$ and \emph{possibly} a resource $r$ to $p_i$
    through $\op{\allocator-assign}(p_i,\bstate, r, \bot)$. }
    \item
    \op{\allocator-validate($p_i, \bstate,\pi$)}:
    Every process $p_i$ may validate a resource commitment $\pi$ by
    invoking $\op{\allocator-validate}$ on input a state, \bstate, and a resource commitment
    $\pi$. The resource allocator validates the resource commitment $\pi$,
    through an event {$\op{\allocator-is-committed}(p_i, \bstate, b)$} event, with $b=\true$ if
    the commitment $\pi$ is a valid resource commitment for the state \bstate or $b=\false$
    otherwise.
  \end{itemize}
\end{definition}
A process triggers \op{\allocator-commit} to pledge its resources to a system,
and it can be assigned a resource commitment as a result to extend the
blockchain.  
If the resource commitment $\pi$ is included on-chain, then it must be
\emph{valid} (i.e., \op{\allocator-validate} returns $\true$) for the block to
be accepted.
Moreover, we assume that all the events {to and from the resource allocator}
happen within the same time step. 
In particular, if a process $p_i$ $\op{\allocator-commit}$s some resource budget $r$ at time step $t$, at the end of {the activation period for~$p_i$} process $p_i$ will receive either a resource commitment $\pi$ and $r$ or an empty resource commitment value $\bot$ and $r$. 

Resources can be classified into various types. 
In our model, these types can be described as the interactions between processes and the resource allocator.
The following definition classifies different types of resources
used in existing blockchain protocols.

\begin{definition}[Types of resource]
   A resource can be classified as follows.
   \begin{description}[style=unboxed]
       \item[Virtual]
        A resource is \emph{virtual} when the resource allocator determines the
        resource budget of all processes from the given blockchain
        state $\bstate$.  
        For a virtual resource, we assume that there exists a function
        $\fun{StateAlloc}: \mathcal{P} \times \mathbb{C} \rightarrow \BN$ that
        takes as input a process $p_i$ and a blockchain $\chain$ and outputs the resource budget of $p_i$, and $p_i$ can invoke
        $\op{\allocator-commit}(\cdot, \cdot, r)$ on
        an empty resource, $r=\bot$.
       \item[External] 
        A resource is \emph{external} when a process must allocate the resource {\em externally} with a budget $r \geq 0$ to invoke $\op{\allocator-commit}()$. {For an external resource, this commitment step is equivalent to giving \allocator access to the external resource with the budget $r$. Moreover,
        we assume that processes cannot lie about the resource budget $r$ and commit more than $r$.} 
       \item[Burnable] 
        A resource is \emph{burnable} when a process $p$ can trigger 
        \emph{multiple} 
        $\op{\allocator-commit}(\cdot, \cdot, r)$ at a time step $t$, and it 
        retrieves $r$ through $\op{\allocator-assign}(\cdot, \cdot, r, \cdot)$
        at the end of {the activation period for $p_i$}.
        For all committing events $\op{\allocator-commit}(p_i, \cdot, r_i)$ from the same process $p_i$ that occur within a time step $t$ , we require $\sum_{r_i > 0} r_i \leq \fun{Alloc}(p_i, t)$.
       \item[Reusable]
       A resource is \emph{reusable} when a process $p_i$ can use the same
       resource budget $r \leq \fun{Alloc}(p_i,t)$ to trigger \emph{infinitely many}
       $\op{\allocator-commit}(\cdot, \cdot, r)$ at each time step $t$, and $p_i$ 
       does not need to retrieve $r$ from the output event $\op{\allocator-assign}$. Hence, for reusable resources, we denote the value of $r$ in the output event $\op{\allocator-assign}(\cdot, \cdot, r, \cdot)$ to be $\bot$.
   \end{description}  
\end{definition}
\begin{remark}
{The assumption on external resources is natural because an \emph{external}
resource is inherently unforgeable; for instance, in \powork, processes cannot
fake this budget as it is the physical limit of the mining hardware. For
resources like storage, the resource is the physical hard drive, and $r$ can be
thought of as the capacity of the hard drive.}
\end{remark}

\pparagraph{Failures} A process that follows its protocol during an execution
is called \emph{correct}.  On the other hand, a \emph{faulty} process may crash
or deviate arbitrarily from its specification, such processes are also called \emph{Byzantine}.  
We consider only Byzantine faults in this work. 
All Byzantine processes are controlled by a probabilistic polynomial-time adversary, 
\CA; {we write $p_i \in \CA$ to denote that a Byzantine process is controlled by $\CA$}. 
{In this model, we require the adversary to go through the same process of committing resources and getting assigned
resource commitments from the allocator. 
Since the allocator assigns the commitment at the end of the time step, 
we require a minimum delay between Byzantine processes to be one.
We also note that this requirement is only for definitional reasons 
and can be relaxed by assuming the network delay to be zero {for Byzantine processes}. 
However, the concrete parameters on the probability of getting assigned resource commitments for Byzantine processes will need to be adjusted to reflect this assumption, and we leave this to future work.}

\begin{definition}[Adversarial Resource Budget] 
{$R_\CA$ is the maximum adversarial resource budget.  
For any time step $t$, it holds that:
$\sum_{p_i\in \mathcal{A}}\fun{Alloc}(p_i,t) \leq R_\CA$.}
\end{definition}

\begin{definition}[Corruption]~\label{def:corruption}
{At any time step $t$, an adversary $\mathcal{A}$ can allocate a resource budget of $\fun{Alloc}(p_i,t)$ from $R_\CA$ to corrupt a process $p_i \in \CP_{\leq t}$.}
\end{definition}

\section{Resource-based Total-order Broadcast}
\label{sec:nakamoto_consensus}

\begin{figure}[t]
\begin{framed}
\noindent\var{unordered}: set of transaction $\tx$ that has been received for execution and ordering \\
$\var{delivered}:$ set of transaction $\tx$ that has been executed and ordered \\
$k$: common prefix parameter\\ 
$\localBlocks$: set of received blocks, $B = (h, \overline{\tx}, \pi, \sigma)$, initially containing $B_0= (\bot, \overline{\tx}, \bot, \bot)$\\
$\mathbb{C}$: set of valid blockchains derived from $\mathcal{B}$, initially contains one chain $\chain = [B_0]$\\
$\lchain$: local selected blockchain  \\
\btx: a \op{\allocator-commit}ted block for $p_i$\\
{At time step $t$, $r_i = \fun{Alloc}(p_i, t)$ if $r_i$ is \emph{external}, $r_i \leftarrow \bot$ if $r_i$ is \emph{virtual}}
\end{framed}
\caption{Initial state of a correct process}
\label{fig:initial-state}
\end{figure}

In this section, we define an algorithm for the \emph{resource-based
longest-chain total-order broadcast} using a (\emph{probabilistic}) \emph{resource
allocator} $\gR$. {We define various properties needed for a secure resource allocator so that the generic algorithm correctly guarantees properties of total-order broadcast}. 
Then, we concretely define three different resource
allocators based on three types of resource: computation, stake, and storage
to inherently capture three popular ({\em probabilistic}) blockchain protocols,
namely, Nakamoto consensus,
Ouroboros Praos, and Filecoin's consensus protocol.

\subsection{Generic Resource-based Longest-chain Total-order Broadcast}~\label{sub:longest-chain-protocol}

\begin{algo}
   \vbox{
   \small
   \begin{numbertabbing}
     xxxx\=xxxx\=xxxx\=xxxx\=xxxx\=xxxx\=MMMMMMMMMMMMMMMMMMM\=\kill
     \textbf{uses} \label{}\\
     \> Resource allocator: $\gR$ \label{}\\
     \> Probabilistic reliable broadcast: \op{gossip} \label{}\\
     \> Validation predicate: $\PP(\cdot, \cdot)$  \label{} \\
     \> Random oracle: $H:\bset^* \rightarrow \bset^\lambda$ \label{}\\
     \> Signature scheme: $\Sigma = (\sign, \verify)$\label{}\\
 \textbf{init} \label{} \\
    \> $\op{Extend}(\chain=[B_0])$  \label{} \\
    
       \textbf{upon} $\fun{a}$-$\fun{broadcast}(\tx)$ \textbf{do}\label{algo:pox-a-broadcast}\\
     \> \textbf{invoke} $\fun{gossip}$-$\fun{broadcast}(\msg{op}{\tx})$ \label{algo:pox-gossip}\\

     \textbf{upon} $\fun{gossip}$-$\fun{deliver}$ $(\msg{op}{\tx})$ \textbf{do}\label{algo:pox-gossip-deliver}\\
     \> $\var{unordered} \becomes \var{unordered}\cup\set{\tx}$ \label{algo:pox-unordered}\\
     {\textbf{upon} $\fun{gossip}$-$\fun{deliver}$ $(\msg{request}{B})$ \textbf{do}} \` $\rhd$ Receive a request for parents of $B$ \label{line:sync0}\\
     \> {\textbf{if} $\exists~\chain \in \BC$ \textbf{s.t.} $\chain[-1] = B$ \textbf{then}} \label{line:sync1}\\
     \>\>{\textbf{forall} $B'\in \chain$ \textbf{do}} \` $\rhd$ Re-send all parents of $B$\label{line:sync2} \\
     \>\>\> {\textbf{invoke} $\fun{gossip}$-$\fun{broadcast}$ $(\msg{blk}{B'})$} \label{line:sync3} \\
     \textbf{upon} $\fun{gossip}$-$\fun{deliver}$ $(\msg{blk}{B})$ \textbf{s.t.} $B=(h,\overline{\tx}, \pi,$ $\sigma_i$) \textbf{do}\label{algo:pox-gossip-deliver-block}\\
     \> \textbf{if} $\verify(p_j, h||\overline{\tx}||\pi||\slot_j,\sigma_j)$  $\wedge~\exists~\chain \in \mathbb{C}$ \textbf{s.t.} $H(\chain[-1]) = h$ $\wedge$ $\PP(\chain, \overline{\tx})$ \textbf{then} \label{line:validatesx} \\
     \>\> $\bstate \becomes (\chain, B)$ \label{} \\
     \>\> \textbf{invoke} $\op{\allocator-validate}(p_i, \bstate, \pi)$ \label{line:r-validate} \\
     \> {else} \label{}\\
     \>\> {\textbf{invoke} $\op{gossip}\text{-}\op{broadcast}(\msg{request}{B})$} \` $\rhd$ Request for parents of $B$\label{line:orphan} \\
     \textbf{upon} $\op{\allocator-is-committed}(p_i,  \bstate, \true)$ \textbf{s.t.} $\bstate = (\chain, B)$ \textbf{do}  \label{} \\
     \> $\mathcal{B} \becomes \mathcal{B} \cup \set{B}$ \label{}\\
     \> \textbf{if} $|\chain| > |\lchain|$ \textbf{then} \label{algo:pox-longest-chain} \\
     \>\>  $\lchain \becomes \chain$\label{algo:pox-c-greater-local} \` $\rhd$  Update the local blockchain\\
     \>\> $\op{Extend}(\lchain)$  \label{algo:pox-mine} \\
     \textbf{upon} $\op{\allocator-assign}(p_i, \bstate, r, \pi)$ \textbf{s.t.} $\bstate=(\chain,B = (h,\overline{\tx}, \pi,\bot)), \pi \neq \bot$ \textbf{do} \label{} \\
     \> {$\sigma_i  \gets \sign(p_i, h||\overline{\tx}||\pi)$}  \label{} \\
     \> $B \becomes (h, \overline{\tx}, \pi,${$\sigma_i$})  \label{algo:pox-pi-to-block} \\
     \> \textbf{if} $r_i$ is \emph{burnable} \textbf{then} \label{} \\
     \> \> $r_i \gets r_i+r$  \label{} \\
     \> \textbf{invoke} $\fun{gossip}$-$\fun{broadcast}$ $(\msg{blk}{B})$ \label{algo:pox-gossip-block} \\
     \textbf{upon} $\op{\allocator-assign}(p_i, \bstate, r, \pi)$ \textbf{s.t.} $\pi = \bot$ \textbf{do} \label{} \\ 
     \> \textbf{if} $r$ is \emph{burnable} \textbf{then} \label{} \\
     \>\> $r_i \gets r$  \label{} \\
     \> {$\op{Extend}(\lchain)$}  \label{} \\
    \textbf{upon} $\fun{a}$-$\fun{deliver}$ $(\msg{op}{\tx})$ \textbf{do}\label{}\\
     \> $\var{delivered} \becomes \var{delivered}\cup\set{\tx}$ \label{}\\
     \textbf{function} $\op{Extend}(\lchain)$\label{}\\
     \> \textbf{forall} $\tx \in  \lchain[:-k]$ $\wedge$ $\tx\notin \var{delivered}$  \textbf{do} \label{line:delivered} \\
     \>\>  \textbf{output} \fun{a}-\fun{deliver}$(\msg{op}{\tx})$ \` $\rhd$  Deliver all transactions in the common prefix\label{algo:pox-a-deliver} \\
     \>\> {$\var{unordered} \becomes \var{unordered} \setminus \{\tx\}$} \label{} \\
     \> $h \becomes H(\lchain[-1])$ \label{}\\
     \> select a list of transactions $\overline{\tx}$ from \var{unordered} such that $\PP(\chain, \overline{\tx})=\true$ \label{algo:pox-valid-list-bitcoin} \\
     \> $\btx \becomes (h, \overline{\tx}, \bot, \bot)$ \label{line:xinblockPOW}\\
     \> \textbf{invoke}  $\op{\allocator-commit}(p_i, (\lchain, \btx), r_i)$ \label{algo:pox-commit}\\
     \> \textbf{if} $r$ is \emph{burnable} \textbf{then}\label{} \\
     \>\> $r_i \gets 0$  \label{line:resource-burnt} 
   \end{numbertabbing}
   }
   \caption{Resource-based longest-chain total-order broadcast (process $p_i$).}
   \label{alg:tob}
\end{algo} 

A protocol for resource-based longest-chain total-order broadcast using $\gR$
allows any process $p_i$ to \emph{broadcast} transactions by invoking
$\op{a-broadcast}(\tx)$ and to \emph{deliver} those that are valid (according
to a validation predicate $\mathbb{P}(\cdot, \cdot)$ and the local chain,
$\lchain$) through an $\op{a-deliver}(\tx)$ event. Delivered transactions are
\emph{totally ordered} and stored in a list, \var{delivered}, by every process. 

In particular, when a process $p_i$ $\op{a-broadcast}$s a transaction, this is
gossiped to every process, and eventually every correct process
$\op{gossip-delivers}$ it and stores it in a set \var{unordered}. 
Every stored transaction is then considered by $p_i$.

At any given time, a process may receive new blocks from other processes. Any
process $p_i$ can validate the block by invoking $\op{\allocator-validate}$ and
$\op{\allocator-assign}$ed resource commitment to a process $p_j$ by $\gR$.
Once the resource commitment is validated, the process verifies other
components of the block such as signature and transactions and store new blocks
in $\CB$.  {Also, if a block $B$ received by other processes does not have a
parent (L\ref{line:orphan}), the process can trigger a request message to pull
the blockchain $\chain$ with $B$ as the tip from other processes. Upon
receiving this request message, other processes re-broadcast every blocks in
$\chain$ with $B$ as the tip (L\ref{line:sync0}-L\ref{line:sync3}). This step
is an oversimplified and inefficient version of how blockchain nodes
synchronize the chain with others. The goal is to demonstrate that it is
feasible to obtain old blocks from other processes.}

At any given time, a correct process adopts the longest chain to its knowledge
as its local chain $\lchain$, and \op{extends} with a block $\btx$ it wishes to
order at the last block of its local chain $\lchain$. Observe that the
$\op{Extend}$ function in Algorithm~\ref{alg:tob} captures the operation of
creating new blocks, usually called \emph{mining}, in blockchain protocols and
we refer to the processes in charge of creating blocks as \emph{miners} or
\emph{validators}, interchangeably.

In our model, we abstract this \emph{extending} operation as the interaction between the processes and the resource allocator $\gR$. Namely, to start extending, process $p_i$ needs to allocate a resource $r$ along with the proposed state $(\lchain, B_{\text{com}})$ to the resource allocator $\gR$ through $\op{\allocator-commit}()$. 
Once $\gR$ $\op{assign}s$ a resource commitment $\pi$, $p_i$ attaches $\pi$ to the block and gossips the block to other processes. The details of this interaction differ depending on the type of resource and are left for the next subsections.
Figure~\ref{fig:initial-state} specifies all data structures maintained by a process, and the code for a process is presented in Algorithm~\ref{alg:tob}.

{For Algorithm~\ref{alg:tob} to satisfy the properties of total-order broadcast, the generic resource allocator needs to satisfy various properties, and we define those properties as follows.}
\begin{definition}[Secure Resource Allocator] 
\label{def:commitment-properties}
  A resource allocator $R$ is secure if it satisfies the following properties:
  \begin{description}[style=unboxed]
    \item[Liveness] 
        {At a time step $t$, if a process $p_i$ invokes
        $\fun{\allocator-commit}(p_i, \bstate, r)$ with a state $\bstate$ and a
        resource budget $r$ then $R$ issues either $\fun{\allocator-assign}(p_i,
        \bstate, r, \pi)$ or $\fun{\allocator-assign}(p_i, \bstate, r, \bot)$
        during time step $t$.}
    \item[Validity] 
    If resource commitment $\pi$ is a valid resource commitment $(i.e., \pi \neq \bot)$
    contained in an output event $\op{\allocator-assign}(p_i,\bstate,r, \pi)$, 
    then any process $p_j$ can invoke
    $\op{\allocator-validate}(p_j,\bstate,\pi)$. The resource allocator $R$
    outputs $\op{\allocator-is-committed}(p_j, \bstate, b)$ with $b=\true$.
    \item[Use-Once] 
    At any time step $t$, for any states $\bstate$, $\bstate_1$, $\bstate_2$, 
    any resource budget $r$, $r_1, r_2 \in \BN$ such that $r_1+r_2 = r$,   
    the probability that \allocator responds with 
    \op{\allocator-assign}$(p_i,\bstate,r,\pi)$ with $\pi\neq \bot$ after
    $\op{\allocator-commit}(p_i, \bstate, r)$
    is \emph{greater or equal} to the probability 
    that \allocator responds \emph{at least one}
    \op{\allocator-assign}$(p_i,\bstate_i,r_i,\pi)$ for $i\in\{1,2\}$ with $\pi
    \neq \bot$ 
    after \emph{two} $\op{\allocator-commit}(p_i, \bstate_1, r_1)$ and
    $\op{\allocator-commit}(p_i, \bstate_2, r_2)$. 

    {For a \emph{reusable resource}, at any time step $t$, a resource budget $r$,
    a state $\bstate$ and upon potentially multiple repeated
    $\op{\allocator-commit}(p_i, \bstate, r)$ from the same process $p_i$,
    if \allocator responds with $\op{\allocator-assigns}(p_i, \bstate, r,
    \pi)$, then $\pi$ is the same for every $\op{\allocator-commit}$ events
    output by $\allocator$.}
  \item[Unforgeability] No adversary can produce a resource commitment $\pi$
    such that $\pi$ has not been previously $\op{\allocator-assigned}$ by
    $\allocator$ and, upon \op{\allocator-validate}($p_i,$ $\bstate,\pi$),
    \allocator triggers \op{\allocator-is-committed}($p_i, \bstate,\true)$, for
   some state \bstate and some process~$p_i$.
   \item[Honest-Majority Assignment] At each time step, we denote with $\hprob$ and $\aprob$ the probabilities that at least one correct process and one Byzantine process, respectively, obtain a valid resource commitment for each $\op{\allocator-commit}$. More formally, for every time step $t$, we define: 
   \begin{equation*}
     \begin{split}
        \aprob &= \Pr[\exists \op{\allocator-assign}(p_i, \bstate, r, \pi)~\text{such that}~\pi\neq \bot \wedge p_i \in \CA],\\
        \hprob &= \Pr[\exists \op{\allocator-assign}(p_i, \bstate, r, \pi)~\text{such that}~\pi\neq \bot \wedge p_i \notin \CA].
     \end{split}
   \end{equation*}
   Then we require that:
   \begin{equation}~\label{eq:threshold}
        \aprob < \frac{1}{\Delta-1+1/\hprob}.
   \end{equation}
  \end{description}
\end{definition}

{The \emph{liveness} property  aims
to capture the mining process in permissionless \powork blockchains and ensure that if
processes keep committing resources, \emph{eventually} one process will get
assigned the resource commitment to extend the blockchain.}

{The \emph{validity} property guarantees that a resource commitment can always
be verified by any process $p_i$ by triggering at any point
$\op{\allocator-validate}$. 
This property captures the fact that any participant can efficiently verify, for example, the validity of the solution to the
computational puzzle in \powork protocols or the evaluation of the verifiable
random function in \postake protocols}.

{The \emph{use-once} property prevents processes from increasing the
probability of getting assigned the resource commitment either by committing
several times, splitting the resource budget and then committing all the
split amounts at different states or by committing a smaller resource budget.
Intuitively, the \emph{use-once} property also implies that the property holds for any
integer partition of $r$ (i.e., $r = \sum_{r_i > 0}{r_i}$).
Moreover, the use-once property also implies that our model mainly
focuses on probabilistic protocols as we do not aim to bypass the lower
bound established in~\cite{DBLP:journals/corr/abs-2101-07095}, namely, there is
no deterministic protocol in \emph{permissionless} setting {that solves
consensus}. On the other hand, we believe that applying our model to
\emph{permissioned} blockchains with \postake, e.g.,
Tendermint~\cite{DBLP:journals/corr/abs-1807-04938}, can be interesting future work. }

The \emph{unforgeability} property ensures that no process $p_i$ can
produce a valid resource commitment $\pi$ that has not been previously
$\op{\allocator-assigned}$ by the resource allocator.

{Finally, the \emph{honest-majority assignment} implies that despite the network delay, correct processes will have a higher probability of getting assigned the resource commitment at each time step. 
\cref{eq:threshold} was established by Ga\v{z}i~\etal~\cite{DBLP:conf/ccs/GaziKR20}, and it takes into account that honest blocks may get discarded due to the network delay $\Delta$.}

\pparagraph{Security Analysis} 
With the defined properties of a secure allocator, 
our model is equivalent to the idealized model
introduced by Ga\v{z}i~\etal~\cite{DBLP:conf/ccs/GaziKR20}. Therefore, their result
also holds for our protocol, and we present them in our model as follows.

\begin{lemma}[\cite{DBLP:conf/ccs/GaziKR20}]~\label{lemma:liveness}
  {Algorithm 1 implemented with a secure resource allocator \gR satisfies the following properties:}
  \begin{description}[style=unboxed]
    \item[Safety] {For any time steps $t_1$ and $t_2$ with $t_1 \leq t_2$, 
      a common prefix parameter $k$ and 
      any local chain maintained by a correct process $\lchain$,
      it holds that $\lchain^{t_1}[:-k]\preceq \lchain^{t_2}$ with an overwhelming
      probability.}
    \item[Liveness] {For a parameter $u$ and any time step $t$, 
      let $\lchain$ be the local chain maintained by a correct process, then
      there is at least one
      new honest block in $\chain^{t+u}/\chain^{t}$ with an overwhelming probability.}
  \end{description}
\end{lemma}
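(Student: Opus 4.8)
The plan is to reduce Algorithm~\ref{alg:tob} with a secure resource allocator to the abstract setting of Ga\v{z}i~\etal~\cite{DBLP:conf/ccs/GaziKR20}, whose analysis establishes exactly these Safety and Liveness guarantees for a generic longest-chain protocol under an idealized lottery oracle. Since the excerpt explicitly states ``our model is equivalent to the idealized model introduced by Ga\v{z}i~\etal'', the work of the proof is to make this equivalence precise: I would exhibit a correspondence between the events of the resource allocator and the events of their model, argue that each property of Definition~\ref{def:commitment-properties} matches the corresponding assumption in their framework, and then invoke their theorem verbatim.

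First I would set up the dictionary. Each \op{\allocator-commit} corresponds to one ``mining query'' / one slot opportunity in the idealized model; an \op{\allocator-assign} with $\pi\neq\bot$ corresponds to the query being ``successful'' (the process wins the leader lottery and may extend), and $\pi=\bot$ to an unsuccessful query. The \op{\allocator-validate}/\op{\allocator-is-committed} pair corresponds to their public verifiability of winning proofs. Under this dictionary: \emph{liveness} of the allocator ensures every query resolves within the time step, so the protocol does not stall; \emph{validity} and \emph{unforgeability} together guarantee that a block carries a valid proof if and only if it was legitimately won, matching their assumption that the adversary cannot forge lottery successes; \emph{use-once} is precisely what rules out the grinding/splitting strategies that would let the adversary inflate \aprob beyond what a single honest-style query yields (and, for reusable resources, fixes the proof so that re-querying the same state gives nothing new), which is what makes the per-step success probabilities $\hprob,\aprob$ well-defined and bounded as in their model; and \emph{honest-majority assignment}, \cref{eq:threshold}, is literally the chain-growth/common-prefix threshold condition $\aprob < 1/(\Delta-1+1/\hprob)$ used by Ga\v{z}i~\etal\ to absorb the $\Delta$-slot discount on honest blocks. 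I would also note the modeling conventions from Section~\ref{sec:modeling_resources}: the enforced minimum delay of one between Byzantine processes, and the fixed total/adversarial budgets $R, R_\CA$, which fix the sized setting their bounds assume.

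Given the dictionary, the remaining steps are: (i) check that the control flow of Algorithm~\ref{alg:tob}---adopt the longest valid chain upon \op{\allocator-is-committed}, extend it via \op{Extend}, attach the assigned $\pi$ and signature, gossip, and the orphan-resolution via \msg{request}{\cdot}---implements the honest longest-chain rule of their model, in particular that a correct process always mines on a longest chain known to it and that chains propagate within $\Delta$ (which follows from the probabilistic-broadcast assumption and the $\Delta$-delay bound); (ii) observe that the common-prefix parameter $k$ in \op{Extend} is the same $k$ as in the Safety statement, so ledger consistency is an immediate corollary of their common-prefix property; and (iii) transcribe their chain-growth/chain-quality statement into the Liveness claim, choosing the parameter $u$ as the length of the window over which their analysis guarantees at least one honest block with overwhelming probability. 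Each of these is a routine matching argument once the oracle correspondence is in place.

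The main obstacle is step (i) of the setup: arguing rigorously that \emph{use-once} is the right abstraction to neutralize \emph{all} adversarial strategies that the idealized model forbids---in particular that it correctly captures both burnable resources (where the adversary gets $r$ back and could re-commit, but the constraint $\sum_{r_i>0} r_i \le \fun{Alloc}(p_i,t)$ caps total per-step attempts) and reusable resources (where \op{\allocator-assign} returns $\pi=\bot$ for the budget and the ``same $\pi$ for repeated commits'' clause forbids nothing-at-stake grinding within a step). I would handle this by showing that for every resource type in Definition~\ref{def:corruption}'s taxonomy, the induced per-step distribution of (honest successes, adversarial successes) is dominated by the distribution Ga\v{z}i~\etal\ assume, so their bounds transfer; the subtlety is that this domination is exactly the content of \emph{use-once} plus the per-type budget constraints, and making that reduction airtight---rather than hand-waving ``it's the same model''---is where the real care is needed. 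The cross-resource-type private-attack and nothing-at-stake quantifications promised in the introduction are then separate follow-up analyses and are not needed for this lemma.
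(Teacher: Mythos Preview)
Your proposal is correct and follows the same route as the paper: reduce to the idealized model of Ga\v{z}i~\etal\ and import their safety/liveness theorem. The paper, however, does not actually carry out the reduction you sketch; it simply asserts in one sentence that ``our model is equivalent to the idealized model introduced by Ga\v{z}i~\etal'' and then states the lemma with the citation in its header, giving no proof at all. Your dictionary (\op{\allocator-commit} $\leftrightarrow$ mining query, \op{\allocator-assign} with $\pi\neq\bot$ $\leftrightarrow$ successful slot, validity/unforgeability $\leftrightarrow$ public verifiability, use-once $\leftrightarrow$ no grinding, honest-majority assignment $\leftrightarrow$ their threshold) and the check that Algorithm~\ref{alg:tob} implements the honest longest-chain rule under $\Delta$-synchrony are exactly the content that the paper leaves implicit, so your write-up would be strictly more detailed than the paper's own treatment of this lemma.
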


{Intuitively, \emph{safety} implies that correct processes do not deliver
different blocks at the same height, while
\emph{liveness} implies that every transaction is eventually delivered by all
correct processes.
Using \cref{lemma:liveness}} and properties of a secure resource allocator, 
we conclude the following.

\begin{theorem}
\label{thm:tob}
If $\gR$ is a \emph{secure} resource allocator,
then Algorithm~\ref{alg:tob} implements total-order broadcast.
\end{theorem}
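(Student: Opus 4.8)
The plan is to derive each of the four total-order-broadcast properties from Definition~\ref{def:tob} by combining the \emph{safety} and \emph{liveness} guarantees of Lemma~\ref{lemma:liveness} with the \emph{validity}, \emph{unforgeability}, and \emph{use-once} properties of the secure resource allocator (Definition~\ref{def:commitment-properties}). The key observation tying everything together is the design of the \op{Extend} function in Algorithm~\ref{alg:tob}: a correct process \op{a-delivers} exactly the transactions appearing in the stable prefix $\lchain[:-k]$ of its local chain, in the order they occur on that chain, and only once (the guard $\tx \notin \var{delivered}$ on L\ref{line:delivered}).

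First I would establish \textbf{No duplication}: this is immediate from the guard $\tx \notin \var{delivered}$ in \op{Extend}, together with the fact that \var{delivered} only grows (each \op{a-deliver} adds $\tx$ to \var{delivered} via the \fun{a}-\fun{deliver} handler), so no correct process outputs \fun{a}-\fun{deliver}$(\tx)$ twice. Next, \textbf{Total order}: suppose correct $p_i$ and $p_j$ both \op{a-deliver} $\tx_1$ and $\tx_2$. A correct process delivers transactions strictly in the order they appear along its local chain, so $p_i$ delivering $\tx_1$ before $\tx_2$ means $\tx_1$ appears at a lower height than $\tx_2$ in $\lchain_{p_i}^{t_1}[:-k]$ for the relevant times. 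By the \emph{safety} property of Lemma~\ref{lemma:liveness}, the stable prefixes of $p_i$ and $p_j$ are consistent (one is a prefix of the other, with overwhelming probability), so $\tx_1$ also appears before $\tx_2$ on $p_j$'s chain; hence $p_j$ delivers $\tx_1$ before $\tx_2$. One subtlety to handle here: the validation predicate $\BP$ guarantees no transaction appears twice along a valid chain (it ``does not consume the same output''), so ``the position of $\tx$ on the chain'' is well defined; I would spell this out. Then \textbf{Agreement}: if correct $p_i$ \op{a-delivers} $\tx$, then $\tx \in \lchain_{p_i}^{t}[:-k]$ for some $t$. By \emph{safety}, this stable prefix is eventually a prefix of every correct process's local chain, and by \emph{liveness} every correct process's chain keeps growing, so $\tx$ eventually sits in the stable prefix $\lchain_{p_j}[:-k]$ of every correct $p_j$, which then \op{a-delivers} it — all with overwhelming probability.

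Finally, \textbf{Validity} is the one requiring the most care and is the main obstacle: if a correct $p_i$ \op{a-broadcasts} a valid $\tx$, I must show $\tx$ eventually lands in the stable prefix of $p_i$'s chain. The broadcast property of \op{gossip} ensures every correct process (in particular, every correct miner) eventually has $\tx \in \var{unordered}$. By the \emph{liveness} of Lemma~\ref{lemma:liveness}, new honest blocks keep being appended to the chain indefinitely; I would argue that a correct miner, when it calls \op{Extend} and selects a transaction list $\overline{\tx}$ with $\BP(\chain,\overline{\tx}) = \true$, will include $\tx$ provided $\tx$ is still valid with respect to the current chain — and if $\tx$ has become invalid, that can only be because a conflicting transaction (or $\tx$ itself) was already included, so $\tx$'s ``effect'' is on-chain in the sense \emph{validity} requires, or else $\tx$ remains perpetually includable and some honest block eventually carries it. The honest blocks and their transactions then become stable by \emph{safety} (they recede past depth $k$), at which point $p_i$ \op{a-delivers} $\tx$. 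The delicate points are: (i) arguing that an honest miner does not indefinitely starve a valid pending transaction — here I would invoke a fairness assumption on the transaction-selection step on L\ref{algo:pox-valid-list-bitcoin}, or note that the standard formulation allows phrasing \emph{validity} modulo such a fairness condition, consistent with how Ga\v{z}i~\etal~\cite{DBLP:conf/ccs/GaziKR20} handle it; and (ii) that \emph{unforgeability} and \emph{validity} of \gR guarantee the blocks produced by correct miners (with genuine \op{\allocator-assign}ed commitments) are accepted by all correct processes via \op{\allocator-validate}, so the liveness of the underlying chain growth indeed translates into these honest blocks being adopted network-wide. I would close by assembling these four bullets into the statement that Algorithm~\ref{alg:tob} satisfies Definition~\ref{def:tob}.
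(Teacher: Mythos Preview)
Your proposal is correct and follows essentially the same approach as the paper: both derive the four total-order-broadcast properties from the safety and liveness guarantees of Lemma~\ref{lemma:liveness}, combined with the secure-allocator properties, and both handle \emph{no duplication} via the \var{delivered} guard, \emph{agreement} and \emph{total order} via the common-prefix (safety) part of Lemma~\ref{lemma:liveness}, and \emph{validity} via gossip delivery plus chain liveness. If anything, you are more careful than the paper on two points the paper glosses over: (i) you flag the fairness assumption behind ``eventually some honest block includes $\tx$'' on L\ref{algo:pox-valid-list-bitcoin}, which the paper asserts without comment, and (ii) you make explicit that transaction positions on a valid chain are well defined. The one thing the paper states up front that you leave implicit is the role of \emph{use-once} in ensuring Byzantine processes cannot inflate $\aprob$ by repeated \op{\allocator-commit} calls, which is what makes Lemma~\ref{lemma:liveness} applicable in the first place; you mention use-once but do not spell out this connection.
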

\begin{proof}
{Observe that, since \gR is a secure resource allocator, it satisfies \emph{use-once} property.
Therefore, Byzantine processes 
cannot amplify the probability $\aprob$ by repeatedly triggering \op{\allocator-commit()} on reusable resources at the same time step.}

For the \emph{validity} property, if a correct process $p_i$ \op{a-broadcasts} a transaction $\tx$ (L\ref{algo:pox-a-broadcast}), $\tx$ is \op{gossip-broadcast} (L\ref{algo:pox-gossip}) and, after $\Delta$, every correct process $\op{gossip-delivers}$ $\tx$ (L\ref{algo:pox-gossip-deliver}) and adds it to \var{unordered} (L\ref{algo:pox-unordered}).
Eventually, transaction $\tx$ is selected by a correct process $p_j$ as part of a block $B$ (L\ref{algo:pox-valid-list-bitcoin}). Block $B$ is then \op{gossip-broadcast} by $p_j$ (L\ref{algo:pox-gossip-block}) and eventually, after $\Delta$, every correct process $\op{gossip-delivers}$ $B$ (L\ref{algo:pox-gossip-deliver-block}), validates $x$ (L\ref{line:validatesx}), and validates the resource commitment (L\ref{line:r-validate}).
{Observe that, because of the \emph{unforgeability} property of $\gR$, a valid resource commitment cannot be produced except by the resource allocator.}
Observe that this last step is possible through the validity property of $\gR$.
The proof then follows from Lemma~\ref{lemma:liveness}.

\emph{No duplication} property follows from the algorithm; if a correct process $p_i$ $\op{a-delivers}$ a transaction $\tx$, $p_i$ adds $\tx$ to $\var{delivered}$ and condition at line L\ref{line:delivered} cannot be satisfied again.

For the \emph{agreement} property, let us assume that a correct process $p_i$ \op{a-delivered} a transaction $\tx$ buried at least $k$ blocks deep in its adopted chain $\chain$. 
Process $p_i$ \op{a-delivers} a transaction $\tx$ when it \emph{updates} the local blockchain with the longest chain $\chain$ (L\ref{algo:pox-c-greater-local}), $\tx$ has not been $\op{a-delivered}$ yet and $\tx$ is part of the common prefix $\chain[:-k]$ (L\ref{line:delivered}).
The property then follows from Lemma~\ref{lemma:liveness}; eventually every correct process $\op{a-delivers}$ transaction $x$, with an overwhelming probability.

Finally, for the \emph{total order} property, 
{from the safety property of Lemma~\ref{lemma:liveness}, 
we know that correct processes do not deliver different blocks at the same height. 
This means that at a given height, if two correct processes $p_i$ and $p_j$ \op{a-delivered} a block, then this block is the same for $p_i$ and $p_j$ with an overwhelming probability.
Moreover, since a block is identified by its hash, due to the collision-resistance property of $H(\cdot)$, it also implies that the set and order of transactions included in the block are the same for every correct process.}
So, if process $p_i$ \op{a-delivers} transaction $\tx_1$ before $\tx_2$, then either $\tx_1$ and $\tx_2$ are in the same block $B$ with $\tx_1$ appearing before $\tx_2$ or they are in different blocks $B_1$ and $B_2$ such that $B_2$ appears in the chain after $B_1$.
The total-order property follows.
\end{proof}

\subsection{Proof-of-Work Resource Allocator}%
\label{sub:proof_of_work}

In this part, we present the \powork allocator as a concrete instantiation of
the resource allocator for \emph{burnable} and \emph{external}
resources. 

\pparagraph{Proof-of-Work Resource Allocator} The \powork resource allocator $\powR$ is parameterized by $\dprob$ which is the default probability of getting assigned resource commitment for $r=1$. $\powR$ works as follows. Upon $\op{\allocator-commit}(p_i,\bstate,r)$ by process $p_i$ with a valid chain $\chain$ {with respect to $B_0$}, $\powR$ starts $r$ concurrent threads of $\op{Pow}()$ function which acts as a biased coin with probability $\dprob$ of assigning the resource commitment. {Observe that, because computation is a \emph{burnable} and \emph{external} resource, processes cannot lie on about the $\op{committed}$ resource budget $r$.} In particular, $\op{Pow}$ uniformly sample a value $\var{nonce}$ in $\bset^{\lambda}$ and either returns $\var{nonce} \in \bset^{\lambda}$ or $\bot$. If $\var{nonce}$ is the returned value in $\bset^\lambda$, then $\powR$ \op{assigns} it as the resource commitment to $p_i$, otherwise it $\op{\allocator-assigns}$ $\bot$ to $p_i$. If the committed chain $\chain$ is not valid, then $\powR$ $\op{\allocator-assigns}$ $\bot$ to $p_i$. 
Validation of the resource commitment can be done by any process $p_j$ through
$\op{\allocator-validate}$; the resource allocator $\powR$ returns either \true
or \false, depending on the validity of the resource commitment. We implement
the resource allocator $\powR$ in Algorithm~\ref{algo:pow} and obtain the 
following lemma and theorem.

\begin{algo}
   \vbox{
   \small
   \begin{numbertabbing}
    xxxx\=xxxx\=xxxx\=xxxx\=xxxx\=xxxx\=MMMMMMMMMMMMMMMMMMM\=\kill
    \textbf{state} \label{}\\
    \> {$B_0$: Genesis block} \label{}\\
    \> $\dprob$: Default probability of getting assigned resource commitment on \emph{one} resource \label{}\\
    \textbf{uses} \label{}\\
    \> Random oracle: $H:\bset^* \rightarrow \bset^\lambda$ \label{}\\
    \textbf{upon} $\op{\allocator-commit}(p_i,\bstate, r)$ \textbf{s.t.} $\bstate = (\chain, B), B=(h, \overline{\tx}, \bot, \bot)$ \textbf{do} \label{}\\ 
    \> \textbf{if} $\chain$ is valid $\wedge~H(\chain[-1])=h$ \textbf{then} \label{line:C-valid} \\
    \>\> \textbf{start} $r$ concurrent threads with $\var{nonce}_j = \op{Pow}(p_i,\bstate)$ for $j\in \{0, \dots, r-1\}$  \label{} \\
    \>\> {\textbf{wait} for all $r$ threads with $\op{Pow}(p_i,\bstate)$ for $j\in \{0, \dots, r-1\}$ to  finish}\label{}\\
    \>\> \textbf{if} $\exists~\var{nonce}_j \neq \bot$ \textbf{then} \label{}\\
    \>\>\> \textbf{output} $\op{\allocator-assign}(p_i, \bstate, r, nonce_j)$ \label{}\\
    \>\> \textbf{else} \label{line:C-invalid-1}\\
    \>\>\> \textbf{output} $\op{\allocator-assign}(p_i, \bstate, r, \bot)$ \label{}\\
    \>  \textbf{else} \label{line:C-invalid-2} \\
    \>\>\textbf{output} $\op{\allocator-assign}(p_i, \bstate, r, \bot)$ \label{}\\
    \textbf{upon}  $\op{\allocator-validate}(p_i, \bstate, \pi)$ \textbf{s.t.} $\bstate=(\chain, B), B=(h,\overline{\tx},\pi, \sigma), \pi=\var{nonce}$ \textbf{do} \label{}\\
    \> $\var{b} \gets H(h||\overline{x}||\var{nonce}) \stackrel{?}{\leq} \dprob\cdot 2^{\lambda}$  $\wedge~\chain$ is valid $\wedge~H(\chain[-1])=h$\label{} \\ 
    \> \textbf{output} \op{\allocator-is-committed}$(p_i, \bstate, \var{b})$\label{}\\
    \textbf{function} $\op{Pow}(p_i,\bstate)$ \` $\rhd$ With $\bstate = (\chain, B)$ and $B=(h, \overline{x}, \bot)$\label{}\\
    \> $\var{nonce} \sample \{0,1\}^\lambda$ \label{}\\
    \> \textbf{if} $H(h||\overline{x}||\var{nonce}) \leq \dprob\cdot 2^{\lambda}$ \textbf{then}\label{linne:foundt}\\
    \>\>\textbf{return} $\var{nonce}$ \label{}\\
    \> \textbf{return} $\bot$ \label{}
    \caption{Implementing PoW resource allocator, $\powR$}
    \label{algo:pow}
   \end{numbertabbing}
   }
\end{algo}

\begin{lemma} 
\label{lem:R_pow}
  {Given the random oracle $H(\cdot)$, 
  the default probability $\dprob$ of getting assigned resource commitment on $r=1$, and 
  the network delay $\Delta$,
  there exists a value $R_\CA$ such that the resource allocator $\powR$ implemented in Algorithm~\ref{algo:pow} is a secure resource
  allocator.}
  \end{lemma}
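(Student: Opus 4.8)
The plan is to verify that $\powR$ of Algorithm~\ref{algo:pow} satisfies each of the five properties of a secure resource allocator in \Cref{def:commitment-properties}, and to exhibit along the way a bound $R_\CA$ for which the last property holds. \emph{Liveness} and \emph{validity} are read directly off the algorithm: on input $\op{\allocator-commit}(p_i,\bstate,r)$ the allocator spawns $r$ threads of $\op{Pow}$, each of which samples one nonce, makes a single random-oracle query, and halts, so all threads finish within the activation period and exactly one $\op{\allocator-assign}$ event — carrying some $\var{nonce}_j$ or $\bot$ — is produced in the same time step; and if a nonce $\pi$ is assigned for $\bstate=(\chain,B)$ then $H(h\|\overline{\tx}\|\pi)\le\dprob\cdot 2^\lambda$, $\chain$ is valid, and $H(\chain[-1])=h$, so by consistency of the random oracle the allocator outputs $\op{\allocator-is-committed}(p_j,\bstate,\true)$ whenever a process $p_j$ invokes $\op{\allocator-validate}(p_j,\bstate,\pi)$.

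For \emph{unforgeability} I would argue in the random-oracle model: a commitment $\pi$ passes $\op{\allocator-validate}$ for $\bstate=(\chain,B)$ only if $H(h\|\overline{\tx}\|\pi)\le\dprob\cdot 2^\lambda$, and finding such a preimage is infeasible without evaluating $H$ at $h\|\overline{\tx}\|\pi$; since in our model the adversary's mining evaluations are routed through $\op{\allocator-commit}$ (as required of Byzantine processes in \Cref{sec:modeling_resources}) and every successful such evaluation is recorded as an $\op{\allocator-assign}$ output, any $\pi$ that validates was previously $\op{\allocator-assigned}$.

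The two substantive properties are \emph{use-once} and \emph{honest-majority assignment}. For \emph{use-once}, observe that a single $\op{\allocator-commit}(p_i,\bstate,r)$ returns $\pi\neq\bot$ precisely when at least one of its $r$ threads succeeds; since distinct threads query $H$ on distinct inputs except with birthday-bound probability $O(r^2/2^\lambda)$, this happens with probability $1-(1-\dprob)^{r}$ up to a negligible term. Splitting $r=r_1+r_2$ into two commits on states $\bstate_1,\bstate_2$ produces $r_1+r_2$ queries on pairwise distinct inputs, hence overall success probability $1-(1-\dprob)^{r_1}(1-\dprob)^{r_2}=1-(1-\dprob)^{r}$ as well (again up to a negligible collision term), while committing any $r'\le r$ yields only $1-(1-\dprob)^{r'}\le 1-(1-\dprob)^{r}$; thus no partition or reduction of the budget increases the success probability, and the reusable sub-clause is vacuous since computation is burnable but not reusable. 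For \emph{honest-majority assignment}, with fixed total budget $R$, adversarial budget at most $R_\CA$, and honest online budget at least $R-R_\CA$, the use-once argument applied to the adversary's processes collectively gives $\aprob\le 1-(1-\dprob)^{R_\CA}$, while $\hprob\ge 1-(1-\dprob)^{R-R_\CA}$. Since $R$ and $\dprob$ are fixed, as $R_\CA$ decreases $\aprob\to 0$ whereas the right-hand side of \cref{eq:threshold} stays bounded below by $\frac{1}{\Delta-1+1/(1-(1-\dprob)^{R})}>0$; hence there is a threshold $R_\CA$ — an adversary controlling a sufficiently small fraction of the total hash rate — for which $\aprob<\frac{1}{\Delta-1+1/\hprob}$ holds, and this is the value claimed by the lemma.

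I expect the main obstacle to be the random-oracle bookkeeping shared by \emph{use-once} and \emph{unforgeability}: making precise that distinct $\op{Pow}$ threads, and distinct commits on different states, query $H$ at distinct points, so that their outcomes are independent and the clean formula $1-(1-\dprob)^{r}$ holds exactly up to a collision term of order $r^2/2^\lambda$; and pinning down the modeling convention under which an adversary's mining queries are mediated by $\op{\allocator-commit}$, so that a valid-but-never-assigned $\pi$ is genuinely impossible rather than merely improbable. The remaining steps — liveness, validity, and the monotonicity argument fixing $R_\CA$ — are routine.
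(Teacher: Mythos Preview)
Your proposal is correct and follows essentially the same property-by-property approach as the paper's proof. If anything, you are more careful than the paper in several places: you explicitly track the birthday-bound collision term in the \emph{use-once} calculation, you note that the reusable sub-clause is vacuous for a burnable resource, you use inequalities $\aprob\le 1-(1-\dprob)^{R_\CA}$ and $\hprob\ge 1-(1-\dprob)^{R-R_\CA}$ where the paper writes equalities, and you give an explicit continuity/monotonicity argument for the existence of $R_\CA$ rather than simply asserting one can ``derive'' it.
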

\begin{proof} 
  \emph{Liveness} property follows from Algorithm~\ref{algo:pow}: upon
  $\op{\allocator-commit}(p_i, \var{st}, r)$ from process $p_i$, the resource allocator
  $\powR$ either \emph{(i)} has L\ref{line:C-valid} satisfied and, eventually,
  outputs $\op{\allocator-assign}(p_i, \var{st}, r, \pi)$ with a resource commitment $\pi$
  to $p_i$ or outputs $\op{\allocator-assign}(p_i, \var{st}, r, \bot)$ to $p_i$ or \emph{(ii)} if the chain is invalid (L\ref{line:C-valid}), and then the allocator outputs $\op{\allocator-assign}(p_i, \var{st},
  r, \bot)$ to $p_i$.

  For the \emph{validity} property, observe that $\pi=\var{nonce}$ is valid if and only if 
  there is a valid chain \chain with respect to the genesis block $B_0$ such
  that the last block $B=\chain[-1]$ contains $\pi$. Hence,
  $\pi$ can be validated by any process $p_j$ through $\op{\allocator-validate}(p_j,
  (\chain, B^*), \pi)$; $\powR$ then checks if $H(B^*)=h$
  and $H(h||\overline{x}||\var{nonce}) \leq \dprob \cdot 2^{\lambda}$ outputting the same result at any process $p_j$.

  The \emph{use-once} property immediately follows because of the burnable property of the underlying resource (i.e., computation). 
  Since $\powR$ triggers $\op{\allocator-assign}$s at the end of the  {the activation period for~$p_i$},
  we claim that for multiple $\op{\allocator-commit}(p_i, \cdot, r_i)$ committed by $p_i$ at $t$, it is equivalent to trigger $\op{\allocator-commit}(p_i, \cdot, r)$ once for $r = \fun{Alloc}(p_i,t)$. In particular, at the time step $t$, 
  let $\mathsf{Bad}$ be the event of not getting any resource commitment on all
  $\op{\allocator-commit}(p_i,\cdot, r_i)$ for $r_i\in \{r_1, r_2\}$ such that $r = r_1+r_2$,
  then the probability 
  of getting the resource commitment is  $1-\Pr[\mathsf{Bad}] = 1 -(1-1+(1-\dprob)^{r_1})(1-1+(1-\dprob)^{r_2}) = 1-(1-\dprob)^{r}$.

  Since the resource allocator $\powR$ uses the random oracle $H$ (i.e., idealized hash function with no exploitable weaknesses),
  the \emph{unforgeability} property follows from the observation that, 
  in order to produce a valid resource commitment, $p_i$ has no better way to find the solution
  than trying many different queries to $H$. 
  This implies that $p_i$ has the same probability of obtaining a valid local resource commitment as it would have by $\op{\allocator-committing}$ to the resource allocator.

  {For the \emph{honest-majority assignment} we recall that $\hprob$ and $\aprob$ are the probabilities that at least one
  correct process and one Byzantine process get the resource commitment after
  one $\op{\allocator-commit}$, respectively.
  {In our model, it is not difficult to see that  
  $\hprob = 1-(1-\dprob)^{R-R_\CA}$ and $\aprob = 1-(1-\dprob)^{R_\CA}$. 
  Therefore, one can easily derive the amount of resource $R_\CA$ such that $\aprob < \frac{1}{\Delta-1+1/\hprob}$.}}
\end{proof}

 \begin{theorem}
 \label{thm:powtob}
   {Algorithm~\ref{alg:tob} with the secure resource allocator $\powR$ implements total-order broadcast.}
 \end{theorem}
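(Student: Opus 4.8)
The plan is to derive Theorem~\ref{thm:powtob} as an immediate corollary of the two results already in hand: Theorem~\ref{thm:tob}, which states that Algorithm~\ref{alg:tob} implements total-order broadcast whenever it is instantiated with a \emph{secure} resource allocator, and Lemma~\ref{lem:R_pow}, which establishes that the concrete \powork allocator $\powR$ of Algorithm~\ref{algo:pow} is a secure resource allocator (for a suitable choice of $R_\CA$ given $H$, $\dprob$, and $\Delta$). So the proof is essentially a one-line composition: instantiate the generic protocol of Algorithm~\ref{alg:tob} with $\gR \becomes \powR$, invoke Lemma~\ref{lem:R_pow} to see that the hypothesis of Theorem~\ref{thm:tob} is met, and conclude.

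\begin{proof}
By Lemma~\ref{lem:R_pow}, there exists a value $R_\CA$ such that the resource allocator $\powR$ of Algorithm~\ref{algo:pow} is a \emph{secure} resource allocator: it satisfies \emph{liveness}, \emph{validity}, \emph{use-once}, \emph{unforgeability}, and \emph{honest-majority assignment} in the sense of Definition~\ref{def:commitment-properties}. Instantiating the generic protocol of Algorithm~\ref{alg:tob} with $\gR \becomes \powR$ therefore yields an execution whose underlying resource allocator is secure. Applying Theorem~\ref{thm:tob} with this choice of $\gR$, we conclude that Algorithm~\ref{alg:tob} instantiated with $\powR$ implements total-order broadcast, i.e., it satisfies \emph{validity}, \emph{no duplication}, \emph{agreement}, and \emph{total order} of Definition~\ref{def:tob}.
\end{proof}

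Because the argument is purely a composition of two earlier statements, there is essentially no technical obstacle left at this point; the real work has already been discharged in the proof of Lemma~\ref{lem:R_pow}. If anything merits a sentence of care, it is checking that the side conditions assumed throughout Section~\ref{sec:nakamoto_consensus} --- in particular that computation is treated as a \emph{burnable} and \emph{external} resource, so that the initialization $r_i = \fun{Alloc}(p_i,t)$ of Figure~\ref{fig:initial-state} and the burnable-resource branches of Algorithm~\ref{alg:tob} (e.g.\ lines resetting $r_i$) are the ones actually exercised --- are consistent with the instantiation; but this is exactly the setting under which $\powR$ was defined and analyzed, so no extra hypothesis is needed. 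One could optionally remark that the same skeleton will be reused to derive the analogous theorems for the \postake and \postorage allocators in the following subsections, the only difference being which concrete lemma plays the role of Lemma~\ref{lem:R_pow}.
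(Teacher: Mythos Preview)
Your proposal is correct and matches the paper's own proof essentially line for line: the paper also derives Theorem~\ref{thm:powtob} as a one-line corollary by invoking Lemma~\ref{lem:R_pow} to certify that $\powR$ is a secure resource allocator and then applying Theorem~\ref{thm:tob}. Your additional remarks about the burnable/external side conditions and the reusability of the skeleton for the \postake and \postorage cases are accurate and consistent with what the paper does in the surrounding subsections.
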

 \begin{proof}
   From Theorem~\ref{thm:tob} and Lemma~\ref{lem:R_pow}, it follows that, since $\powR$ is a secure resource allocator, then Algorithm~\ref{alg:tob} with $\powR$ implements total-order broadcast.
 \end{proof}

\subsection{\PoStake Resource Allocator}
\label{sub:pos-allocator}

The resource in \postake protocols is the stake of each process, and stake is a \emph{virtual} and \emph{reusable} resource.
In those protocols, the probability of a process $p_i$ being 
assigned a resource commitment is proportional to its stake in the system. 
In this part, we focus on Ouroboros
Praos~\cite{DBLP:conf/eurocrypt/DavidGKR18} for our formalization.
Before presenting the \postake resource allocator as a concrete instance of a resource allocator for \emph{reusable} and \emph{virtual} resources, we need to introduce additional considerations and definitions.

\pparagraph{Reusable Resources}
By definition, a reusable resource allows processes to repeatedly trigger $\op{\allocator-commit}$ 
in the same time step using the same resource. 
Hence, if \allocator does not satisfy \emph{use-once} property and assigns a resource commitment randomly, then every time a process $p_i$ triggers $\op{\allocator-commit}$, process $p_i$ might end up with a different result.
For this reason, a na\"{i}ve implementation of the resource allocator for reusable resources would allow an adversary to amplify the probability
of {getting assigned a resource commitment} (i.e., $\aprob$) by repeatedly invoking \op{\allocator-commit} 
using the same resource, i.e., in a grinding attack~\cite{DBLP:journals/iacr/BonneauCG15} at the same time step.
{Hence, for \emph{probabilistic} blockchain
protocols, to cope with this problem, one needs to ensure that the allocator satisfies \emph{use-once} property.}
In particular, 
from designs of \postake protocols like Snow White~\cite{DBLP:conf/fc/DaianPS19} and Ouroboros Praos~\cite{DBLP:conf/eurocrypt/DavidGKR18}, three commonly used approaches  to enforce \emph{use-once} property are:

\begin{description}[style=unboxed]
  \item[$(R1)$ Explicit Time Slots]
   The first mechanism to enforcing \emph{use-once} property is to index the resource by time
      slots. Protocols like
      Ouroboros Praos~\cite{DBLP:conf/eurocrypt/DavidGKR18} and Snow White~\cite{DBLP:conf/fc/DaianPS19} 
      require processes to have synchronized
      clocks to explicitly track time slots and epochs to ensure that
      each process derives a deterministic leader selection result from the same state.
  \item[$(R2)$ Leader Selection from the Common Prefix]
    {This mechanism requires correct processes to
    extract the set of potential leaders from the common prefix. 
    In particular, the common prefix is a shortened local longest chain
    that 
    is with overwhelming probability the same for all correct processes.
    This approach allows them to
    share the same view of potential leaders.}
    
  \item[$(R3)$ Deterministic and Trustworthy Source of Randomness] 
    {The source of randomness has to be trustworthy to ensure
    a fair leader election and to defend against an adaptive adversary that
    might corrupt processes predicted to be leaders for the upcoming time
    slots.
    In addition, the source of randomness has to be deterministic for each time slot 
    and chain state in order to prevent the previously mentioned grinding attack.
    Hence, popular Proof-of-Stake blockchain protocols often rely on
    sophisticated protocols to produce randomness securely.}
 
\end{description}

{\pparagraph{Slot and Epoch} An \emph{epoch} $e$ is a set of $q$ adjacent time slot
$S=\set{\slot_0,\dots, \slot_{q-1}}$. In practice, slot $\slot$ consists of a sufficient number of time steps so that discrepancies between processes' clocks are insignificant, and processes advance the slot at the same speed. {In our model, we simplify this bookkeeping by requiring the allocator to maintain the slot and epoch.}}

\pparagraph{Stake Distribution} 
The \emph{stake distribution} at a time step $t$ is $\stakeDis^t=\{(p_1, r_1), \dots\}$
with $r_i \geq 0$, specifies the amount of stake owned by each process $p_i \in \mathcal{P}_{\leq t}$. 
We denote $\stakeDis^e$ the stake distribution at the beginning of
epoch $e$.  
{The stake distribution $\stakeDis^e$  can be obtained
  from $\fun{StateAlloc}(\chain[0:\slot], p_i)$ for $\slot \leq e\cdot q$ for each $p_i \in \mathcal{P}_{\leq t}$.}

\begin{definition}[Leader Selection Process]~\label{def:posdef}
A \emph{leader selection process} $(\mathcal{D},F)$ with respect to a {stake} distribution
$\stakeDis= \{(p_1,r_1), \dots\}$ is a pair
consisting of a distribution $\mathcal{D}$ and a deterministic function
${F}$. When $\rho \sample \mathcal{D}$, for all  $sl \in \BN$, $F(\stakeDis, sl; \rho)$ outputs process $p_i$ with probability
$ 1 - (1-\dprob)^{r_i} $
where $\dprob$ is the probability of assigning resource commitment for $r=1$ for a given slot.
\end{definition}

\begin{algo}
   \vbox{
   \small
   \begin{numbertabbing}
    xxxx\=xxxx\=xxxx\=xxxx\=xxxx\=xxxx\=MMMMMMMMMMMMMMMMMMM\=\kill
    \textbf{state} \label{}\\
     \> {$B_0$: Genesis block} \label{}\label{}\\
     \> $\mathcal{D}$ : Distribution \label{} \\ 
     \> $F$ : Leader selection function \label{} \\
     \> $\slot$ : Current slot, initially $\slot=0$  \label{} \\
     \> $e$ : Current epoch, initially $e=0$  \label{} \\
     \> $k$ : common prefix parameter  \label{} \\
     \> $q$ : number of slots in an epoch, initially $q=16\cdot k$ \label{}\\
     \> $T$ : set of assigned resource commitments, initially empty  \label{} \\
     \textbf{uses} \label{}\\
     \> Random oracle: $H:\bset^* \rightarrow \bset^\lambda$ \label{}\\
     \textbf{upon} $\op{\allocator-commit}(p_i, \bstate, r_i)$ \textbf{s.t.} $\bstate = (\chain, B)$, $B=(h, \overline{\tx}, \bot, \bot)$ \textbf{do} \label{}\\ 
     \> \textbf{if} $\chain$ is valid $\wedge~H(\chain[-1])=h$ \textbf{then}  \label{line:pos-c-valid} \\ 
     \>\> obtain $\prechain$ by pruning all blocks with slot $> (e-2)\cdot q$, from \chain \label{}\\
     \>\> {\textbf{if} $\prechain = \emptyset$ \textbf{do}}\label{}\\
     \>\>\> {$\prechain \leftarrow [B_0]$}\label{}\\
     \>\> \textbf{if} $\exists (p_i, \prechain, \rho^*, \slot) \in T$ \textbf{then} \` $\rhd$ Queried before \label{}\\
     \>\>\> $\rho \leftarrow \rho^*$ \label{} \\
     \>\> \textbf{else} \label{}\\
     \>\>\> $\rho \sample \mathcal{D}$ \` $\rhd$ Sample a fresh randomness  \label{line:pos-randomness} \\ 
     \>\>\> $T \gets T \cup \{(p_i, \prechain, \rho, \slot)\}$\` $\rhd$ Update $T$ \label{line:T} \\ 
     \>\> obtain the stake distribution $\stakeDis^{e-2}$ from $\prechain$ \` $\rhd$ Evaluate $\fun{StateAlloc}(\cdot, \cdot)$  \label{} \\
     \>\> $p_j\becomes F(\stakeDis^{e-2}, \slot; \rho)$  \label{line:Foutputsp} \\  
     \>\> \textbf{if} $p_i = p_j$ \textbf{then} \label{}\\
     \>\>\> $\pi \gets (p_i, \rho, \slot)$  \label{} \\
     \>\>\> \textbf{output} $\op{\allocator-assign}(p_i, \bstate, \bot, \pi)$ \label{}\\
     \>\> \textbf{else}  \label{line:pos-c-invalid-1}\\
     \>\>\> \textbf{output} $\op{\allocator-assign}(p_i, \bstate, \bot, \bot)$ \label{}\\
     \> \textbf{else}  \label{line:pos-c-invalid-2} \\
     \>\> \textbf{output} $\op{\allocator-assign}(p_i, \bstate, \bot, \bot)$ \label{}\\
     \textbf{upon}  $\op{\allocator-validate}(p_i, \bstate, \pi)$ \textbf{s.t.} $\bstate= (\chain,B), B=(h,\overline{\tx},\pi, \sigma), \pi=(p_j, \rho, \slot)$ \textbf{do}\label{}\\
     \> obtain $\prechain$ by pruning all blocks with slot $> (e-2)\cdot q$, from \chain \label{}\\
     \> {\textbf{if} $\prechain = \emptyset$ \textbf{do}}\label{}\\
     \>\> {$\prechain \leftarrow [B_0]$}\label{}\\
     \> \textbf{if} $\exists (p_i, \prechain, \rho, \slot) \in T$ \textbf{then} \` $\rhd$ Queried before \label{}\\
     \>\> obtain the stake distribution $\stakeDis^{e-2}$ from $\prechain$  \label{} \\
     \>\> $p_j^*\becomes F(\stakeDis^{e-2},\slot; \rho)$  \label{} \\  
     \>\> $\var{b} \gets p_i \stackrel{?}{=} p_j^*$ $\wedge~\chain$ is valid $\wedge~H(\chain[-1])=h$ \label{} \\
     \>\> \textbf{output} \op{\allocator-is-committed}$(p_i, \bstate, \var{b})$  \label{}\\
     \> \textbf{else}  \label{} \\
     \>\> \textbf{output} \op{\allocator-is-committed}$(p_i, \bstate, \false)$  \label{}\\
     \textbf{upon} \op{Timeout} \textbf{do} \` $\rhd$ Increment slot \label{} \\
     \> $\slot \gets \slot + 1$ \label{} \\
     \> \textbf{if} $\slot \mod q = 0$ \textbf{then} \` $\rhd$  Increment epoch \label{}\\
     \>\> $e \gets e+1$  \label{} \\
     \> \op{starttimer}()  \label{}
    \caption{Implementing PoS Resource Allocator, $\stakeR$}
    \label{algo:ouroboros-pos}
   \end{numbertabbing}
   }
\end{algo}

\pparagraph{\PoStake Resource Allocator} 
{The \PoStake resource allocator $\allocator_{\mathsf{pos}}$ with the leader selection process $(\CD, F)$ works as follows. 
First, we require that \stakeR keeps track of the current epoch and time slot to correctly assign the resource commitment to process $p_i$ for the
current slot. \stakeR keeps track of the slot through
$\var{Timeout}$ triggered by the $\op{starttimer}()$ event. 
This approach is to enforce the first requirement of explicit time slots (R1). 
Secondly, upon $\op{\allocator-commit}(p_i,\bstate,r)$ by process $p_i$ with a valid state
$\bstate$ in slot $\slot$, {\stakeR first checks if 
{a random value $\rho \in \CD$ for ($p_i$,\bstate,\slot) has been previously sampled; if so, 
then \stakeR picks it; otherwise, a fresh random value is sampled. 
This requirement ensures a deterministic and trustworthy source of randomness (R3).}
Then, \stakeR obtains the stake distribution of two epochs before,
$\mathbb{S}^{e-2}_{\text{stake}}$ from $\bstate$. {This ensures a leader selection from the common prefix (R2).} 
The resource allocator \stakeR uses $\mathbb{S}^{e-2}_{\text{stake}}$ together with the sampled randomness
as input to the leader selection function $F$ to check if $p_i$ is selected for
the slot $\slot$.}
If this is the case, then $\stakeR$ $\op{assigns}$
the resource commitment to $p_i$, otherwise it $\op{assigns}$ $\bot$. If the
committed chain $\chain$ is not valid, then $\stakeR$
$\op{assigns}$ $\bot$ to $p_i$.} 
{A validation of the resource commitment can be done by any process $p_j$
through $\op{\allocator-validate}$; the resource allocator $\stakeR$ returns
either \true or \false, depending on the validity of the resource commitment.}
Finally, the \postake resource allocator is presented in Algorithm~\ref{algo:ouroboros-pos},
and we conclude the following lemma and theorem.

\begin{remark}
In practice, the randomness generation can be instantiated using verifiable random
function~\cite{DBLP:conf/pkc/DodisY05},  
multiparty coin-tossing~\cite{DBLP:conf/crypto/KiayiasRDO17} protocol, or a random beacon~\cite{DrandDocumentation}. However,
Algorithm~\ref{algo:ouroboros-pos} aims to show the distinction between
\emph{external} and \emph{virtual} resources. 
\end{remark}

\begin{lemma}
\label{lem:R_pos}
  {Given the random oracle $H(\cdot)$, 
  the leader selection process (\CD, F) parameterized by the default probability $\dprob$, and
  the network delay $\Delta$,
  there exists a value $R_\CA$
  such that the resource allocator $\stakeR$ implemented in Algorithm~\ref{algo:ouroboros-pos} is a secure resource allocator.}
\end{lemma}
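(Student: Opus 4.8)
The plan is to verify each of the five properties required by Definition~\ref{def:commitment-properties} for the concrete allocator $\stakeR$ of Algorithm~\ref{algo:ouroboros-pos}, largely mirroring the structure of the proof of Lemma~\ref{lem:R_pow} but accounting for the fact that stake is \emph{virtual} and \emph{reusable} rather than \emph{external} and \emph{burnable}. Once this is done, the corresponding theorem (that Algorithm~\ref{alg:tob} instantiated with $\stakeR$ implements total-order broadcast) follows immediately from Theorem~\ref{thm:tob}.

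First I would dispatch \emph{liveness} and \emph{validity}, which are essentially structural. For \emph{liveness}, inspecting the \op{\allocator-commit} handler shows that on any input $(p_i,\bstate,r_i)$ the allocator reaches exactly one \op{\allocator-assign} output within the same time step: either the chain is valid and $H(\chain[-1])=h$, in which case it samples (or reuses) $\rho$, evaluates $F$ on $\stakeDis^{e-2}$, and outputs $\op{\allocator-assign}(p_i,\bstate,\bot,\pi)$ or $\op{\allocator-assign}(p_i,\bstate,\bot,\bot)$; or the chain check fails and it outputs $\op{\allocator-assign}(p_i,\bstate,\bot,\bot)$. For \emph{validity}, I would observe that a non-$\bot$ commitment $\pi=(p_j,\rho,\slot)$ recorded in $T$ satisfies, by construction at line~\ref{line:Foutputsp}, $p_j = F(\stakeDis^{e-2},\slot;\rho)$; since $F$ is deterministic, any process $p_k$ invoking $\op{\allocator-validate}(p_k,\bstate,\pi)$ on a state containing a valid chain with matching tip recomputes $\stakeDis^{e-2}$ from $\prechain$, re-derives the same $p_j$, and the allocator returns $b=\true$. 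The only subtlety here is that $T$ is a shared table internal to the allocator, so the entry sampled at commit time is available at validate time; I would note this explicitly as part of the idealized-allocator abstraction.

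Next come the two properties where the reusable nature of stake actually bites: \emph{use-once} and \emph{unforgeability}. For \emph{use-once}, the key point is that the allocator indexes the sampled randomness $\rho$ by the tuple $(p_i,\prechain,\slot)$ rather than by the full state $\bstate$, where $\prechain$ is obtained by pruning all blocks with slot $> (e-2)q$. Consequently, all the states that a process could grind over within time slot $\slot$ — which share the same recent common prefix $\prechain$ and the same slot — map to the \emph{same} $\rho$ and hence the \emph{same} leader-selection outcome $F(\stakeDis^{e-2},\slot;\rho)$, which is exactly the reusable-resource clause of \emph{use-once}: repeated $\op{\allocator-commit}(p_i,\bstate,r)$ yield the same $\pi$. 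For the first (inequality) clause, since $r$ is virtual we always have $r=\bot$ and splitting is vacuous; the single-commit probability is $1-(1-\dprob)^{r_i}$ by Definition~\ref{def:posdef} and is unaffected by repeated queries. For \emph{unforgeability}, I would argue that a valid $\pi$ must be of the form $(p_i,\rho,\slot)$ with $p_i = F(\stakeDis^{e-2},\slot;\rho)$ and with $(p_i,\prechain,\rho,\slot)\in T$; the latter membership can only arise from a prior $\op{\allocator-assign}$ by the allocator itself (the only writer of $T$), so no process can produce a fresh accepting commitment.

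The step I expect to be the main obstacle is \emph{honest-majority assignment}, i.e.\ establishing existence of a bound $R_\CA$ with $\aprob < 1/(\Delta-1+1/\hprob)$. Here, unlike the PoW case, the grinding/nothing-at-stake concern and the two-epoch lookback make the per-slot success probabilities less transparent: $\hprob$ and $\aprob$ depend on the stake split inside $\stakeDis^{e-2}$ rather than on the live resource split, and one must argue that the adversary cannot inflate $\aprob$ beyond $1-(1-\dprob)^{R_\CA}$ by issuing many commits (which \emph{use-once} rules out) or by biasing $\rho$ (which the deterministic, per-$(p_i,\prechain,\slot)$ sampling rules out, modeling requirements R1--R3). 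Given that, the argument parallels Lemma~\ref{lem:R_pow}: in a slot, $\hprob = 1-(1-\dprob)^{R-R_\CA}$ and $\aprob = 1-(1-\dprob)^{R_\CA}$ where $R-R_\CA$ and $R_\CA$ are the honest and adversarial stake in $\stakeDis^{e-2}$ (invoking the fixed-total-resource assumption), so choosing $R_\CA$ sufficiently small relative to $R$ makes $\aprob$ arbitrarily small while keeping $\hprob$ bounded away from zero, which yields \eqref{eq:threshold}. I would also remark that the honesty assumption must be imposed on the stake distribution $\stakeDis^{e-2}$ two epochs in the past — this is where the corruption model of Definition~\ref{def:corruption} and the $q=16k$ epoch length enter — and flag that the precise relation between past-stake honesty and current honesty is exactly the kind of long-range consideration the later sections address. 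Finally, with all five properties verified, the theorem $\stakeR$-instantiated Algorithm~\ref{alg:tob} implements total-order broadcast follows directly from Theorem~\ref{thm:tob}, exactly as Theorem~\ref{thm:powtob} followed from Theorem~\ref{thm:tob} and Lemma~\ref{lem:R_pow}.
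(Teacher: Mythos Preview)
Your proposal is correct and follows essentially the same approach as the paper: you verify the five properties of Definition~\ref{def:commitment-properties} one by one, relying on the allocator's table $T$ indexed by $(p_i,\prechain,\slot)$ for validity, use-once, and unforgeability, and on the formulas $\hprob = 1-(1-\dprob)^{R-R_\CA}$, $\aprob = 1-(1-\dprob)^{R_\CA}$ for honest-majority assignment. The only minor divergence is that for the splitting clause of \emph{use-once} you dismiss it as vacuous (since $r=\bot$), whereas the paper instead argues stake-invariance of $1-(1-\dprob)^{r_i}$ to rule out an adversary dividing its stake across identities; both readings are defensible given the definition, and the paper also briefly notes (which you do not) that committing to shorter chains can slightly raise $\aprob$ but is self-defeating.
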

\begin{proof}
\emph{Liveness} property follows from the algorithm: upon $\op{\allocator-commit}(p_i, \var{st}, r)$ from process $p_i$, the resource allocator $\stakeR$ either \emph{(i)} has L\ref{line:pos-c-valid} satisfied and, eventually, outputs $\op{\allocator-assign}(p_i, \var{st}, \bot, \pi)$ with a resource commitment $\pi$ to $p_i$, or outputs $\op{\allocator-assign}(p_i, \bstate, \bot, \bot)$ to $p_i$ or \emph{(ii)} if the chain invalid (L\ref{line:pos-c-valid}), the allocator outputs $\op{\allocator-assign}(p_i, \var{st}, \bot, \bot)$ to $p_i$. 

For the \emph{validity} property, observe that $\pi=(p_i, \rho, \slot)$ is valid if and only if $p_i$ is a leader for \slot.
If $p_i$ is a leader for \slot, then {in $T$ there must be the random value $\rho$ previously sampled for $p_i$~(L\ref{line:T}). This means that $F$ evaluated on $\rho$ will output again $p_i$.}
Hence, $\pi$ can be validated by any process $p_j$ through $\op{\allocator-validate}(p_j,\var{st}, \pi)$; $\stakeR$ checks if $\pi \in T$ outputting the same result to $p_j$.

{The \emph{Use-once} property follows because, in our model, \stakeR keeps track of previous $\op{\allocator-commit}$ from $p_i$ along with the time slots and states. 
Moreover, the choice of $\mathsf{prob}_i$ is stake-invariant, and it ensures that an adversary cannot increase its probability of being elected leader by dividing its stake into multiple identities. The proof for this is identical to the proof in~\cref{lem:R_pow}.
In practice, this property is enforced by the deterministic outputs of VRF and Hash function along with slot number and the common chain prefix as input.}

{
The \emph{unforgeability} property follows from the fact that any resource commitment produced by $\stakeR$ is stored by the resource allocator in a set $T$ of assigned resource commitments (L\ref{line:T}).
Hence, it is not possible for any process $p_i$ to produce a valid resource commitment that is not in $T$.
In practice, this property is guaranteed by the uniqueness property of verifiable random functions 
or the collision-resistant property of hash functions.
}

For the \emph{honest-majority assignment} property, it is not difficult to see that 
we can derive $\hprob$ and $\aprob$ from $R$ and $R_\CA$. 
In particular, $\hprob = 1-(1-\dprob)^{R-R_{A}}$ and $\aprob \approx 1-(1-\dprob)^{R_A}$. 
{Here we note that the adversary can slightly increase $\aprob$ by committing to shorter chains. 
However, it also means that the adversary will fall behind as it has to extend a much shorter chain than the current local chain maintained by correct processes, and we assume the adversary has no reason to do so. 
Hence, we consider $\aprob = 1-(1-\dprob)^{R_A}$.}
Thus, we can derive $R_A$ so that $\aprob < \frac{1}{\Delta - 1 + 1/\hprob}$.
\end{proof}

\begin{theorem}
\label{thm:poxtob}  
 Algorithm~\ref{alg:tob} with the secure resource allocator $\stakeR$
 implements total-order broadcast.
\end{theorem}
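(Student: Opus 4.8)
The plan is to reduce Theorem~\ref{thm:poxtob} to the two results already available: the generic equivalence established in Theorem~\ref{thm:tob}, which says that \emph{any} secure resource allocator plugged into Algorithm~\ref{alg:tob} yields total-order broadcast, and Lemma~\ref{lem:R_pos}, which says that the concrete \postake allocator $\stakeR$ of Algorithm~\ref{algo:ouroboros-pos} is in fact a secure resource allocator (for a suitable choice of $R_\CA$). Composing these two gives the statement immediately. So the proof is essentially a one-line citation of prior results, entirely parallel to the proof of Theorem~\ref{thm:powtob}.

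Concretely, first I would invoke Lemma~\ref{lem:R_pos} to assert that, given the random oracle $H(\cdot)$, the leader selection process $(\CD, F)$ parameterized by $\dprob$, and the network delay $\Delta$, there is a value $R_\CA$ for which $\stakeR$ satisfies all five properties of Definition~\ref{def:commitment-properties} (liveness, validity, use-once, unforgeability, honest-majority assignment) — i.e.\ $\stakeR$ is a secure resource allocator. Then I would apply Theorem~\ref{thm:tob} with $\gR \define \stakeR$: since $\stakeR$ is secure, Algorithm~\ref{alg:tob} instantiated with $\stakeR$ implements total-order broadcast, meaning it satisfies validity, no duplication, agreement, and total order as in Definition~\ref{def:tob}. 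That is exactly the claim.

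There is no real mathematical obstacle here; all the work has already been done in Lemma~\ref{lem:R_pos} (where the use-once property for a \emph{reusable} resource is the delicate point, handled by having $\stakeR$ cache sampled randomness in the set $T$ and key it by process, pruned prefix, and slot) and in Theorem~\ref{thm:tob} (where the reduction to the Ga\v{z}i~\etal\ analysis via Lemma~\ref{lemma:liveness} does the heavy lifting). If anything needs a word of care, it is only to note that the constraint on $R_\CA$ demanded by the honest-majority assignment inequality~\eqref{eq:threshold} is the same constraint under which Algorithm~\ref{alg:tob}'s guarantees hold, so the hypotheses match up without extra bookkeeping; and that the reusable/virtual nature of stake is already fully absorbed into the notion of ``secure resource allocator,'' so Theorem~\ref{thm:tob} applies verbatim. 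I would therefore write the proof in two sentences, mirroring the proof of Theorem~\ref{thm:powtob}: from Lemma~\ref{lem:R_pos}, $\stakeR$ is a secure resource allocator; hence by Theorem~\ref{thm:tob}, Algorithm~\ref{alg:tob} with $\stakeR$ implements total-order broadcast.
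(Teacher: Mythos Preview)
Your proposal is correct and matches the paper's proof essentially verbatim: the paper also simply invokes Lemma~\ref{lem:R_pos} to conclude $\stakeR$ is a secure resource allocator and then applies Theorem~\ref{thm:tob} to obtain total-order broadcast, in a single sentence parallel to the proof of Theorem~\ref{thm:powtob}. There is nothing to add.
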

\begin{proof}
 From Theorem~\ref{thm:tob} and Lemma~\ref{lem:R_pos}, it follows that, since $\stakeR$ is a secure resource allocator, then Algorithm~\ref{alg:tob} with $\stakeR$ implements total-order broadcast.
\end{proof}

\section{Proof-of-Space Resource Allocator}%
\label{sub:proof_of_spacetime}
In this subsection, we consider {a \postorage allocator inspired} by Filecoin~\cite{filecoin2017}, 
a blockchain protocol based on storage. {Storage is an example of \emph{external} 
and \emph{reusable} resources.}

\pparagraph{Overview of Filecoin} In the Filecoin protocol, participants get elected to write 
new blocks proportionally to the amount of storage they pledge to the blockchain.
Since storage can be reused,  an adversary could pledge multiple time the same storage.
Hence, Filecoin protocol asks that participants periodically post cryptographic proofs 
of their storage to the blockchain, to ensure that the resources that they pledge are still committed. 
If a miner fails to prove their storage frequently, they are removed from the \emph{power table}, i.e., the mapping from miners' identity to their amount of storage, and lose their right to create blocks.
The maintenance of the power table thus happens on top of the consensus protocol. 
The consensus protocol then uses the power table to elect leaders, just like a \postake blockchain,
except that participants are weighted according to their storage instead of coins.

Filecoin uses a tipset model instead of the longest chain model, meaning that a block can have more than one parent. 
This model is inspired by blockDAGs~\cite{DBLP:conf/podc/KeidarKNS21}, 
although in the tipset model all the blocks included in a tipset must have the same set of parents (unlike blockDAGs).
In this paper however, we are interested in modeling resource allocators based on different resources and regardless of the underlying consensus protocol. For this reason, we choose to use a longest-chain protocol to model Filecoin as this ensures that we can meaningfully compare all the protocols based on their underlying resource rather than their fork-choice rule or consensus protocol. 

{In the following we discuss how to build the \postorage resource allocator based on Filecoin protocol in our model.}

{\pparagraph{Resource budget}
A \emph{sector} is the smallest unit of
storage that a process can commit.
In our model, we define the resource budget $r$ controlled by each process 
to be the number of sectors each process owns.}

{Filecoin implements the three approaches described in~\Cref{sub:pos-allocator} to 
enforce \emph{use-once} property of the resource allocator.
Moreover, it also implements an \emph{external resource verification} procedure to 
verify if the committed resource is valid. 
}

{\pparagraph{Explicit Time Slot} 
Similar to \postake protocols, 
Filecoin requires processes to explicitly keep track of
the time slot. Hence, in our model, we require the
resource allocator \storageR to maintain the time step number $\slot$ explicitly
and include it to be a part of the resource commitment.}

\begin{definition}[Storage Distribution] The storage distribution $\spaceDis =
\{(p_1, r_1), \dots\}$, where $r_i \geq 0$, specifies the amount of storage that each process pledged through sealing transactions. 
We denote with $\spaceDis^\slot$ the storage distribution at the slot \slot. 
\end{definition}

\begin{definition}[Leader Selection Process]
\label{def:postorageprob}
A \emph{leader selection process} $(\mathcal{D},F)$ with respect to a storage distribution
$\spaceDis= \{(p_1,r_1), \dots\}$ is a pair
consisting of a distribution $\mathcal{D}$ and a deterministic function
${F}$. When $\rho \sample \mathcal{D}$, for all  $sl \in \BN$, $F(\spaceDis, \slot; \rho)$ outputs process $p_i$ with probability
$1 - (1-\dprob)^{r_i} $
where $\dprob$ is the probability that a process gets assigned resource commitment with $r=1$ for a given slot.
\end{definition}

\begin{remark}
we note that in practice Filecoin uses a Poisson sortition\footnote{\url{https://spec.filecoin.io}, Section 4.18.53.111 Running a leader election}, inspired by Algorand cryptographic sortition~\cite{DBLP:conf/sosp/GiladHMVZ17}, to elect multiple leaders. 
This means that a process will pseudorandomly generate a number using its private keys and the value of its random number will determine how many times it is elected leader. The smallest the random value is the more elections it wins. The target values, i.e., the values that determine the number of elections won, are chosen according to a Poisson distribution with parameter $\small{\textsf{numberOfExpectedLeader}}\times \alpha$, i.e., the number of elections won for a process with proportional resource $\alpha$ follows a Poisson distribution of parameter $\small{\textsf{numberOfExpectedLeader}}\times\alpha$. 
If we are to translate this election process to the case where only one leader is elected, we set $\small{\textsf{numberOfExpectedLeader}}=1$ and then a process is elected if it wins at least one election.
As the Binomial distribution can be approximated by Poisson, this is equivalent to using the probability above.
\end{remark}

In addition to the leader selection process, Filecoin protocol also validates the \emph{external} resources pledged by the leader.
This validation is accomplished by requiring the process $p_i$ to provide a cryptographic proof showing the existence of a random data partition (e.g., \postorage).
The randomness used for this challenge is generated through the use of Drand~\cite{DrandDocumentation} values and verifiable random functions.
To simplify the protocol's description, 
similar to the PoS-based protocol, we rely on the resource allocator to securely generate randomness for each slot. Moreover, we require every process $p_i$ to commit $r$ indicating the number of partitions $p_i$ stores.
To capture the \emph{external resource verification} procedure used in Filecoin, we introduce the following definition.

\begin{definition}[External Resource Verification]  
An \emph{external resource verification} procedure $(\mathcal{D},E)$
is a pair consisting of a distribution $\mathcal{D}$ and a deterministic function
${E}$. When $\rho \sample \mathcal{D}$, for every $r$ and  $r'$ in $\BN$ with $r'
\geq r$, $E(r,r'; \rho)$ outputs $\true$ with 
a probability $\displaystyle \frac{r}{r'}$, if $r_i=0$ or $r>r'$, $E()$ outputs \false for all $\rho \in \CD$. 
\end{definition}

\pparagraph{Proof-of-Storage Resource Allocator}
{Proof-of-Storage resource allocator $\storageR$ using the leader selection process $ F()$ 
and the external resource verification $E()$ works as follows. 
First, we require that \storageR keeps track of the current time slot in
order to correctly assign the resource commitment to process $p_i$ for the
current slot. 
Resource allocator \storageR keeps track of the slot through
$\var{Timeout}$ triggered by the $\op{starttimer}()$ event. 
This approach implements the explicit time slot requirement (R1).
Second, upon $\op{\allocator-commit}(p_i,\bstate,r)$ by process $p_i$ with a valid chain
$\chain$ in slot $\slot$, \storageR first checks if 
a randomness $\rho \in \CD$ for \bstate and \slot has been previously sampled; if so, 
then \storageR uses it for leader selection and external resource verification. Otherwise, a fresh randomness is sampled. 
This requirement ensures deterministic and trustworthy source of randomness (R3).
To ensure the leader selection from the common prefix (R2),
\storageR obtains the storage distribution
$\spaceDis$ (i.e., the power table) only from blocks  with slot number less than $sl-k$.
Finally, together with the sampled randomness, \storageR uses $F()$ and $E()$ 
to check if $p_i$ is the leader for
the slot $\slot$ and if $p_i$ passes the external resource verification. 
If both verifications pass then $\storageR$ $\op{assigns}$
the resource commitment to $p_i$, otherwise it $\op{assigns}$ $\bot$. 
If the committed chain $\chain$ is not valid, then $\storageR$
$\op{assigns}$ $\bot$ to $p_i$.}

{A validation of the resource commitment can be done by any process $p_j$
through $\op{\allocator-validate}$; the resource allocator $\storageR$ returns
either \true or \false, depending on the validity of the resource commitment.}

We implement the resource allocator \storageR in Algorithm~\ref{alg:filecoin-pos}.

\begin{algo*}
\vbox{
  \small
  \begin{numbertabbing}
    xxxx\=xxxx\=xxxx\=xxxx\=xxxx\=xxxx\=MMMMMMMMMMMMMMMMMMM\=\kill
    \textbf{state:} \label{}\\
     \> {$B_0$: Genesis block} \label{}\\
     \> $\mathcal{D}$ : Distribution \label{} \\
     \> $F$ : Leader selection function \label{} \\
     \> $E$ : External resource verification function \label{}\\
     \> $\slot$ : Current time slot, initially $\slot=0$  \label{} \\
     \> $k$:~~Common prefix parameter \label{}\\  
     \> $T$ : Set of issued resource commitments, initially empty \label{} \\
     ~\textbf{uses:} \label{}\\
     \> Random oracle: $H:\bset^* \rightarrow \bset^\lambda$ \label{}\\
     ~\textbf{upon} $\op{\allocator-commit}(p_i, \bstate, r)$ \textbf{s.t.} $\bstate = (\chain, B)$, $B=(h, \overline{\tx}, \bot, \bot)$  \textbf{do}: \label{}\\ 
     \> \textbf{if} $\chain$ is valid $\wedge~H(\chain[-1])=h$ \textbf{then} \label{line:post-c-valid} \\ 
      \>\> {obtain $\prechain$ by pruning all blocks with slot $> \slot-k$, from \chain} \label{}\\
      \>\> {\textbf{if} $\prechain = \emptyset$ \textbf{then}}\label{}\\
     \>\>\> {$\prechain \leftarrow [B_0]$}\label{}\\
     \>\> \textbf{if} $\exists (p_i, \prechain, \rho^*, \slot) \in T$ \textbf{then} \` $\rhd$ Queried before \label{}\\
     \>\>\> $\rho \leftarrow \rho^*$  \label{} \\
     \>\> \textbf{else} \label{}\\
     \>\>\> $\rho \sample \mathcal{D}$ \` $\rhd$ Sample a fresh randomness  \label{} \\ 
     \>\>\> $T \gets T \cup \{(p_i, \prechain, \rho, \slot)\}$\` $\rhd$ Update $T$ \label{line:post-T} \\ 
     \>\> {obtain the storage distribution $\spaceDis$ from $\prechain$} \` $\rhd$ Evaluate $\fun{StateAlloc}(\cdot, \cdot)$  \label{} \\
     \>\> {$p_j\becomes F(\spaceDis, \slot; \rho)$}  \label{line:post-Foutputsp} \\  
     \>\> {$b \becomes E(r, \fun{StateAlloc}(p_i, \prechain); \rho)$} \` $\rhd$ Verify the existence of the external resource \label{alg:post-E}\\
     \>\> \textbf{if} $p_i = p_j \wedge b = \true$: \label{line:post-is-leader}\\
    \>\>\> $\pi \leftarrow (p_i, \rho, \slot)$ \label{} \\
    \>\>\> \textbf{output} $\op{\allocator-assign}(p_i, \bstate, \bot, \pi)$ \label{line:post-assign-r}\\
              \>\> \textbf{else}  \label{line:post-no-leader} \\
     \>\>\> \textbf{output} $\op{\allocator-assign}(p_i, \bstate, \bot, \bot)$ \label{alg:post-fail-eval}\\
    \> {\textbf{else} } \label{line:post-c-invalid} \\
     \>\> {\textbf{output} $\op{\allocator-assign}(p_i, \bstate, \bot, \bot)$} \label{line:post-c-invalid}\\
    ~\textbf{upon}  $\op{\allocator-validate}(p_i, \bstate, \pi)$ \textbf{s.t.} $\bstate=(\CC, B), B=(h,\overline{\tx},\pi, \sigma), \pi=(p_j, \rho, \slot)$ \textbf{do}\label{}\\
     \> {obtain $\prechain$ by pruning all blocks with slot $> \slot-k$, from \chain} \label{}\\
     \> {\textbf{if} $\prechain = \emptyset$ \textbf{do}}\label{}\\
     \>\> {$\prechain \leftarrow [B_0]$}\label{}\\
     \> \textbf{if} $\exists (p_i, \prechain, \rho, \slot) \in T$ \textbf{then} \` $\rhd$ Queried before \label{}\\
     \>\> {obtain the storage distribution $\spaceDis$ from $\prechain$} \` $\rhd$ Evaluate $\fun{StateAlloc}(\cdot, \cdot)$  \label{} \\
     \>\> {$p_j^*\becomes F(\spaceDis, \slot; \rho)$}  \label{line:post-Foutputsp} \\  
     \>\> $\var{b} \gets p_j \stackrel{?}{=} p_j^*$ $\wedge~\chain$ is valid $\wedge~H(\chain[-1])=h$ \label{} \\ 
     \>\> \textbf{output} \op{\allocator-is-committed}$(p_i, \bstate, \var{b})$  \label{}\\
     \> \textbf{else}  \label{} \\
     \>\> \textbf{output} \op{\allocator-is-committed}$(p_i, \bstate, \false)$  \label{}\\
     ~\textbf{upon} \op{Timeout} \textbf{do} \` $\rhd$ Increment slot \label{} \\
     \> $\slot \gets \slot + 1$ \label{} \\
     \> \op{starttimer}()  \label{}
    \caption{Implementing \postorage resource allocator, $\storageR$}
    \label{alg:filecoin-pos}
  \end{numbertabbing}
}
\end{algo*}

\begin{lemma}
\label{lem:R_post}
{Given the random oracle $H(\cdot)$, 
the leader selection process $(\CD, F)$ parameterized by the default probability $\dprob$,
the external resource verification $(\CD, E)$,
and the network delay $\Delta$,
there exists a value $R_\CA$ such that
the resource allocator $\storageR$ implemented in Algorithm~\ref{alg:filecoin-pos} is a secure resource allocator.}
\end{lemma}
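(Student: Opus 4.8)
The plan is to check, in turn, the five properties of a \emph{secure} resource allocator from \cref{def:commitment-properties} for $\storageR$, borrowing the arguments of \cref{lem:R_pow,lem:R_pos} wherever the leader-selection machinery coincides with that of the \postake allocator, and isolating the two genuinely new ingredients: the external resource verification procedure $(\CD, E)$ and the power table derived from sufficiently old blocks (slots at most $\slot-k$). \emph{Liveness} is immediate from \cref{alg:filecoin-pos}: every $\op{\allocator-commit}(p_i,\bstate,r)$ either enters the branch guarded by \cref{line:post-c-valid} and outputs some $\op{\allocator-assign}(p_i,\bstate,\bot,\pi)$ with $\pi\in\{(p_i,\rho,\slot),\bot\}$, or falls through to the invalid branch with output $\bot$. \emph{Validity} holds because an assigned $\pi=(p_i,\rho,\slot)$ is produced only after $(p_i,\prechain,\rho,\slot)$ has been recorded in $T$ (\cref{line:post-T}); since $F$ is deterministic, any later $\op{\allocator-validate}(p_j,\bstate,\pi)$ recomputes the same $\prechain$, retrieves the same $\rho$, and re-derives $F(\spaceDis,\slot;\rho)=p_i$, returning $\true$. \emph{Unforgeability} follows because every commitment handed out is inserted into $T$, so any $\pi\notin T$ fails the membership test in $\op{\allocator-validate}$ (in practice guaranteed by the uniqueness of the VRF or collision resistance of $H$).

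For \emph{use-once}, I would mirror \cref{lem:R_pos}. The reusability clause is the caching argument: repeated $\op{\allocator-commit}(p_i,\bstate,r)$ in a fixed slot reduce to the same $\prechain$, reuse the recorded $\rho$, and hence yield the identical output. For the integer-partition clause, the leader weight $1-(1-\dprob)^{r_i}$ is storage-invariant across identities, exactly as the stake weight was, so splitting a budget over several identities or several committed states cannot push the probability of at least one success above $1-(1-\dprob)^{r}$; the only addition is that a successful commitment now requires \emph{both} $F$ and $E$ to accept (\cref{line:post-is-leader}), and since $E(r,r';\rho)$ accepts with probability $r/r'\le 1$, inserting this factor can only shrink every per-commit success probability, so the bound survives.

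The step I expect to be the main obstacle is \emph{honest-majority assignment}, which is exactly where $\storageR$ departs from $\stakeR$: one must express $\hprob$ and $\aprob$ through $R$ and $R_\CA$ while accounting for $E$. For a correct process, the amount it commits equals the amount it has registered in, and genuinely stores against, the power table, so $E$ accepts deterministically and its per-commit success probability is purely the leader probability; hence $\hprob = 1-(1-\dprob)^{R-R_\CA}$, as for \postake. For the adversary the crucial claim is that inflating its power-table entry does not help: even if it appears with weight $r'$ in the common-prefix power table while physically holding only $r \le R_\CA$ sectors, the unforgeability of an \emph{external} resource forces it to commit at most $r$, so its per-commit success probability is $\bigl(1-(1-\dprob)^{r'}\bigr)\cdot \tfrac{r}{r'}$ (the leader election and the resource check drawing independent randomness, as in practice), which by concavity of $s\mapsto 1-(1-\dprob)^{s}$ is at most $1-(1-\dprob)^{r}\le 1-(1-\dprob)^{R_\CA}$. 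We may therefore set $\aprob = 1-(1-\dprob)^{R_\CA}$, recovering the \postake formula, and then solve $\aprob < \tfrac{1}{\Delta-1+1/\hprob}$ for $R_\CA$ exactly as in \cref{lem:R_pos}; since honest-majority assignment is the only property that constrains $R_\CA$, this yields the promised value and the lemma follows.

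Two points will still need explicit care. First, as in \cref{lem:R_pos}, the adversary can marginally raise its success probability by committing to artificially short chains; this is dismissed the same way, since a shorter chain leaves the adversary behind the correct processes and there is no incentive to do so. Second, one must verify that reading the power table only from sufficiently old blocks prevents the adversary from grinding on recent storage changes before they are confirmed — which is exactly the leader-selection-from-the-common-prefix argument $(R2)$ already invoked for the \postake allocator, transported to storage weights.
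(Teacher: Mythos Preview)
Your proposal is correct and follows essentially the same route as the paper: verify the five secure-allocator properties in turn, reuse the \postake arguments from \cref{lem:R_pos} wherever the caching and leader-selection logic coincides, and for honest-majority assignment obtain $\hprob=1-(1-\dprob)^{R-R_\CA}$ and $\aprob\le 1-(1-\dprob)^{R_\CA}$ before solving \cref{eq:threshold} for $R_\CA$, with the same shorter-chain caveat. Your concavity bound $\bigl(1-(1-\dprob)^{r'}\bigr)\cdot r/r'\le 1-(1-\dprob)^{r}$ for an adversary that over-pledges in the power table is actually more explicit than the paper, which merely asserts that committing fewer than the pledged resources makes $E$ strictly lower the success probability.
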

\begin{proof}
    The \emph{Liveness} property follows from the algorithm: upon $\op{\allocator-commit}(p_i, \var{st}, r)$ from process $p_i$, the resource allocator $\storageR$ either \emph{(i)} has L\ref{line:post-is-leader} satisfied and, eventually, it outputs $\op{\allocator-assign}(p_i, \var{st}, \bot, \pi)$ with a resource commitment $\pi$ to $p_i$, {or it outputs $\op{\allocator-assign}(p_i, \var{st}, \bot, \bot)$ to $p_i$ either {if the external resource verification fails}} or if $p_i$ is not a leader for the current slot (L\ref{alg:post-fail-eval}) or \emph{(ii)} if the chain is invalid (L\ref{line:post-c-invalid}), the allocator outputs $\op{\allocator-assign}(p_i, \var{st}, \bot, \bot)$ to $p_i$.
      
    For the \emph{validity} property, one can apply the same reasoning as in Lemma~\ref{lem:R_pos}; if $p_i$ is a leader for \slot, then {in $T$ there must be the random value $\rho$ previously sampled for $p_i$~(L\ref{line:post-T}). This means that $F$ evaluated on $\rho$ will output again $p_i$. 
    We note that Byzantine process $p$ with no resource can trigger a commit, then uniformly choose $\rho$ later so that $F()$ outputs $p$; however, this happens with a negligible probability for sufficiently large $|\CD|$. 
    }
    Hence, $\pi$ can be validated by any process $p_j$ through $\op{\allocator-validate}(p_j,B, \pi)$; $\storageR$ checks if $\pi \in T$ and it outputs the same result to process $p_j$.
    
  The \emph{use-once} property follows because, in our model, \storageR keeps track of previous $\op{\allocator-commit}$ from $p_i$ along with the time slots and states. 
  Moreover, the choice of $\mathsf{prob}_i$ ensures that an adversary cannot increase its probability of being elected leader by dividing its storage into multiple identities. The proof for this is identical to the proof in~\cref{lem:R_pow}.

  The \emph{unforgeability} property can be proved in the same way as in Lemma~\ref{lem:R_pos}.

  For the \emph{honest-majority assignment}, due to the choice of probability defined in~\cref{def:postorageprob},
  one can derive $\hprob$ and $\aprob$ directly from $R$ and $R_\CA$. 
  In particular, $\hprob = 1-(1-\dprob)^{R-R_{\CA}}$ and $\aprob \leq 1-(1-\dprob)^{R_A}$. 
  Observe that the adversary can commit fewer resources than the pledged amount. However, due to external resource verification $E()$, the probability of getting assigned a resource commitment will be strictly lower. Therefore, $\aprob$ is strictly less than or equal to $1-(1-\dprob)^{R_A}$; 
  the equality occurs only when Byzantine processes commit all resource budgets. 
  Additionally, as in the proof of stake protocols, we note that the adversary can slightly increase $\aprob$ by committing to shorter chains.
  However, in this case, this means that the adversary will fall behind as it has to extend a much shorter chain than the current local chain maintained by correct processes.
  Hence, we consider $\aprob = 1-(1-\dprob)^{R_A}$.
  Thus, we can derive $R_A$ so that $\aprob < \frac{1}{\Delta - 1 + 1/\hprob}$.
\end{proof}

 \begin{theorem}
 \label{thm:postob}
   {Algorithm~\ref{alg:tob} with the secure resource allocator $\storageR$
   implements total-order broadcast.}
 \end{theorem}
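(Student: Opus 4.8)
The plan is to mirror exactly the structure used for Theorem~\ref{thm:powtob} and Theorem~\ref{thm:poxtob}: the generic result Theorem~\ref{thm:tob} already states that Algorithm~\ref{alg:tob} instantiated with \emph{any} secure resource allocator implements total-order broadcast, so it suffices to invoke it with $\gR \becomes \storageR$ and cite Lemma~\ref{lem:R_post}, which asserts the existence of an $R_\CA$ for which $\storageR$ (Algorithm~\ref{alg:filecoin-pos}) satisfies Definition~\ref{def:commitment-properties}. Hence the proof is essentially a two-line composition of previously established facts, and no new argument is needed at this point.

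If one wanted to make the dependency explicit rather than purely citational, the first step would be to fix the parameters: pick $\dprob$, the network delay $\Delta$, and then let $R_\CA$ be the bound guaranteed by Lemma~\ref{lem:R_post}, so that the honest-majority assignment inequality $\aprob < \tfrac{1}{\Delta-1+1/\hprob}$ holds with $\hprob = 1-(1-\dprob)^{R-R_\CA}$ and $\aprob = 1-(1-\dprob)^{R_\CA}$. The second step would be to observe that, with these parameters, all five properties of a secure resource allocator (liveness, validity, use-once, unforgeability, honest-majority assignment) hold for $\storageR$ by Lemma~\ref{lem:R_post}. The third step is then to apply Theorem~\ref{thm:tob} verbatim to conclude validity, no duplication, agreement, and total order for Algorithm~\ref{alg:tob} run with $\storageR$.

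Since the substance is entirely inherited, there is no genuine obstacle in \emph{this} theorem; all the real work is discharged inside Lemma~\ref{lem:R_post}. The only point worth flagging is that $\storageR$ differs from the \postake allocator by the extra external-resource-verification step $E(\cdot,\cdot;\rho)$, which is why $\aprob \leq 1-(1-\dprob)^{R_\CA}$ holds with equality only when Byzantine processes commit their full pledged budget; one must take this (pessimistic) equality case when choosing $R_\CA$, exactly as done in the proof of Lemma~\ref{lem:R_post}. Beyond that, the proof of Theorem~\ref{thm:postob} is just: by Lemma~\ref{lem:R_post} the allocator $\storageR$ is secure, so by Theorem~\ref{thm:tob} Algorithm~\ref{alg:tob} with $\storageR$ implements total-order broadcast. $\qed$
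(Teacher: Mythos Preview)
Your proposal is correct and matches the paper's own proof essentially verbatim: the paper also derives Theorem~\ref{thm:postob} in one line by combining Lemma~\ref{lem:R_post} (that $\storageR$ is a secure resource allocator) with Theorem~\ref{thm:tob}. The additional explanatory remarks you give about parameter choice and the role of $E(\cdot,\cdot;\rho)$ are accurate but go beyond what the paper records here, since that content already lives inside the proof of Lemma~\ref{lem:R_post}.
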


 \begin{proof}
   From Theorem~\ref{thm:tob} and Lemma~\ref{lem:R_post}, it follows that, since $\storageR$ is a secure resource allocator, then Algorithm~\ref{alg:tob} with $\storageR$ implements total-order broadcast.
 \end{proof}
 
\section{Trade-offs Between Different Resources}
\label{sec:Sec}

In this section, we describe various attacks against the resource-based total-order broadcast. In particular, we demonstrate long-range attacks against \emph{virtual} resources, and we discuss the incentive consideration that describes the cost of launching attacks against \emph{burnable} and \emph{reusable} resources.  
\subsection{Virtual Resource vs External Resource: Long-Range Attacks}
\label{sub:long-range}
\pparagraph{Long-range Attacks on Virtual Resources} Long-range attacks~\cite{DBLP:journals/access/DeirmentzoglouP19} (LRAs), also sometimes called posterior-corruption attacks, can be mounted on any blockchain based on a virtual resource (such as \postake) if the majority of the set of active processes from an earlier slot becomes inactive in a later slot, as they no longer have any stake left in the system. 
Formally, they can be defined as follows:

\begin{definition}[Virtual-Resource-Shifting Event]~\label{def:shifting}
  {A \emph{Virtual-Resource-Shifting Event} happens when there exist two values $h_0$, $h_1$, and a set of processes $\mathcal{P_{\mathsf{maj}}}$ such that:}
  \begin{itemize}
    \item {\textbf{Active at $h_0$}: At height $h_0$, processes in
        $\mathcal{P}_{\mathsf{maj}}$ control the majority of the total
      virtual resource (i.e., $R$), namely:}
       ${\sum_{p_i \in \mathcal{P}_{\mathsf{maj}}} (\fun{StateAlloc}(p_i,\chain[0:h_0])) > R-R_A}$
    \item {\textbf{Inactive at $h_1$}: At height $h_1>h_0$, processes in $\mathcal{P}_{\mathsf{maj}}$ control
      less virtual resources than the total number of resources controlled by the adversary (i.e., $R_{\mathcal{A}}$):} 
        ${\sum_{p_i \in \mathcal{P}_{\mathsf{maj}}} (\fun{StateAlloc}(p_i,\chain[0:h_1])) \leq R_{\mathcal{A}}}$
  \end{itemize}
\end{definition}
If \cref{def:shifting} is satisfied, then most processes
in $\mathcal{P}_{\mathsf{maj}}$ have released all or part of their resources
by height $h_1$, and the adversary has enough budget to 
corrupt all the active processes in $\mathcal{P}_{\mathsf{maj}}$ since they
are all inactive in the present. The adversary could then use these processes
to re-write the chain from $\chain[h_0]$ since with a virtual
resource as no external resource is needed to call the resource allocator. 
Furthermore, the \emph{release of resource} from processes in
$\mathcal{P}_{\mathsf{maj}}$  also happens on-chain, e.g., in the case of
\postake for a process to move from active to inactive, it will spend its
coins on-chain. An adversary re-writing the history of the chain could simply
omit these transactions such that all processes satisfying
definition~\ref{def:shifting}
stay active in the alternative chain that the adversary is writing.
The attack proceeds as follows:
\begin{enumerate}
    \item When a virtual-resource-shifting event happens at the current height $h_1$, $\mathcal{A}$
      corrupts all processes in $\mathcal{P}_{\mathsf{maj}}$. 
      Since the total of resources controlled by these processes is less than $R_\CA$, $\CA$ has enough budget
      to do so;
    \item\label{enum:lra-grow-chain} $\mathcal{A}$ starts a new chain
      $\chain^*$ at $\chain[h_0]$. At this height, $\CA$ controls the majority
      of the virtual resource, and because the resource allocator takes
      no further input apart from the state of $\chain[0:h_0]$, it assigns the
      resource commitment to $\mathcal{A}$ with high probability;
    \item  $\mathcal{A}$ now controls all processes in
      $\mathcal{P}_{\mathsf{maj}}$ and can alter the state of the chain such
      that the processes in $\mathcal{P}_{\mathsf{maj}}$ never release
      their resource; 
    \item The adversarial chain will grow at a faster rate and will \emph{eventually} become
      longer than the honest chain because there is no network delay between
      corrupted processes. 
\end{enumerate}
\pparagraph{Long-range Attacks on External Resources} The strategy above does not work with \emph{external} resources.
Even if \cref{def:shifting} holds, the adversary cannot call the
resource allocator by simply corrupting the processes $p_i$ as an \emph{external}
resources would be needed as input to the resource allocator
(step~\ref{enum:lra-grow-chain} in the strategy above).

We formalize the implication of the long-range attack in the following lemma and theorem.

\begin{lemma}[Long-range Attack]\label{lem:lra}
  In a virtual-resource-based total-order broadcast (Algorithm~\ref{alg:tob}),
  let $\lchain$ be the longest chain maintained by a correct process, if a virtual-resource-shifting event occurs, then an adversary can \emph{eventually}
  form a valid chain $\chain^*$ that is {longer than} $\lchain$.
\end{lemma}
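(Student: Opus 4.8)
The plan is to show that once a virtual-resource-shifting event occurs, the adversary can simulate, on the alternative chain $\chain^*$ rooted at $\chain[h_0]$, an execution of Algorithm~\ref{alg:tob} in which it controls a strict majority of the virtual resource and suffers no network delay, and then invoke the liveness part of Lemma~\ref{lemma:liveness} (equivalently Lemma~\ref{lem:R_pos}, since the allocator is the \postake allocator $\stakeR$) to conclude that $\chain^*$ grows without bound and eventually overtakes $\lchain$.

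First I would unpack Definition~\ref{def:shifting}: at height $h_0$ the set $\mathcal{P}_{\mathsf{maj}}$ controls more than $R - R_\CA$ of the virtual resource according to $\fun{StateAlloc}(\cdot, \chain[0:h_0])$, while at the current height $h_1$ these same processes control at most $R_\CA$ of the resource according to the present state, hence by Definition~\ref{def:corruption} the adversary has enough budget to corrupt every process in $\mathcal{P}_{\mathsf{maj}}$. Second, I would argue that because the resource is \emph{virtual}, the adversary, having corrupted $\mathcal{P}_{\mathsf{maj}}$, can issue $\op{\allocator-commit}(p_i, \bstate, \bot)$ for each $p_i \in \mathcal{P}_{\mathsf{maj}}$ on states $\bstate$ that extend $\chain[0:h_0]$ — no external resource input is required, so corruption alone suffices to act on behalf of these processes on the forked chain. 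Third, I would observe that on $\chain^*$ the effective stake distribution seen by $\stakeR$ is the one derived from $\chain[0:h_0]$ (more precisely from the appropriate pruned prefix $\prechain$ two epochs back), and the adversary can choose the transactions it includes on $\chain^*$ so that none of the resource-releasing transactions of $\mathcal{P}_{\mathsf{maj}}$ ever appear; thus throughout the growth of $\chain^*$ the adversarially controlled processes retain strictly more than $R - R_\CA$ of the virtual resource, i.e.\ the "adversarial" fraction on $\chain^*$ exceeds $1/2$, while the honest fraction is below $1/2$.

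Fourth, I would invoke the honest-majority-style analysis in reverse: since on $\chain^*$ the adversary controls a majority of the resource and — because all of $\mathcal{P}_{\mathsf{maj}}$ is corrupted and messages among corrupted processes incur no delay — there is effectively $\Delta = 1$ (or the zero-delay relaxation discussed after Definition~\ref{def:corruption}), the adversary's block-production rate on $\chain^*$ strictly exceeds the honest block-production rate on $\lchain$. By the liveness guarantee of Lemma~\ref{lemma:liveness} applied to the adversary's internal simulation, $|\chain^*|$ grows at this faster rate, so for the parameter $u$ given there and any sufficiently large elapsed time, $|\chain^*| > |\lchain|$ with overwhelming probability; since $\chain^*$ consists only of valid blocks (valid signatures from corrupted keys, valid resource commitments from $\stakeR$ evaluated on the forked state, and transactions satisfying $\PP$), it is a valid chain, which is exactly the claim.

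The main obstacle I expect is making the "faster rate" argument rigorous without a new concentration bound: I need to compare two random processes — honest growth of $\lchain$ governed by $\hprob$ with delay $\Delta$, and adversarial growth of $\chain^*$ governed by a probability close to $1$ (majority stake) with delay $1$ — and show the second dominates the first in the long run. The cleanest route is to note that the adversary's setting on $\chain^*$ is itself an instance of Algorithm~\ref{alg:tob} with a secure allocator in which the adversary plays the role of the honest majority, so its chain satisfies the liveness bound of Lemma~\ref{lemma:liveness} and grows linearly in time at a rate strictly above that of $\lchain$ (whose growth rate is capped by the honest parameters and the fact that honest processes are now a minority on $\chain^*$'s fork while still bounded on $\lchain$); taking the difference of the two linear growth functions gives a gap that diverges, hence "eventually" $|\chain^*| > |\lchain|$. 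I would be careful to state this as an asymptotic ("eventually", with overwhelming probability) rather than claiming a fixed finite bound, matching the wording of the lemma.
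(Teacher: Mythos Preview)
Your setup (steps one through three) is essentially the paper's: corrupt $\mathcal{P}_{\mathsf{maj}}$ using the budget granted by Definition~\ref{def:shifting} and Definition~\ref{def:corruption}, exploit the \emph{virtual} nature of the resource to call $\op{\allocator-commit}(p_i,\cdot,\bot)$ on the forked state, and suppress the resource-releasing transactions so that the adversarial majority persists along $\chain^*$. That part is fine and matches the paper closely.

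The gap is in your fourth step. You try to obtain the rate comparison by black-boxing through Lemma~\ref{lemma:liveness}, treating the adversary's execution on $\chain^*$ as an instance of Algorithm~\ref{alg:tob} in which it plays the honest majority. But the liveness clause of Lemma~\ref{lemma:liveness} only says ``at least one honest block appears in any window of length $u$''; it does not hand you a linear growth rate, and in particular it gives you nothing to \emph{upper-bound} the growth of $\lchain$. With only ``$\geq 1/u$ blocks per step'' on both sides you cannot conclude that $\chain^*$ outpaces $\lchain$. Your sentence ``whose growth rate is capped by the honest parameters'' is exactly the missing piece, and no cited lemma supplies it.

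The paper fills this gap directly: it fixes $\Delta=1$ for simplicity, models the number of successful slots on each chain over an interval of length $t$ as a sum of independent Bernoulli trials, and applies Chernoff bounds to get (i) an upper bound $\approx \hprob\cdot t$ on the honest chain's growth and (ii) a lower bound $\approx \aprob\cdot t$ on the adversarial chain's growth, both with overwhelming probability. Since the virtual-resource-shifting event gives the adversary more than $R-R_\CA$ stake on $\chain^*$, one has $\aprob > 1-(1-\dprob)^{R-R_\CA} = \hprob$, and the linear gap $(\aprob-\hprob)\cdot t$ diverges. So you do need a concentration step here; the black-box invocation of Lemma~\ref{lemma:liveness} is not enough on its own.
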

\begin{proof}
  If a virtual-resource-shifting event (\cref{def:shifting}) occurs, 
  an adversary $\mathcal{A}$ can corrupt all
  $p_i \in \mathcal{P}_{\mathsf{maj}}$ at height $h_1$. 
  Notice that $\mathcal{A}$ can do this because according to the threat model
  defined in definition~\ref{def:corruption}, $\CA$ has enough resource budget
  to corrupt all $p_i \in \mathcal{P}_{\mathsf{maj}}$.

  Adversary $\mathcal{A}$ can start a new chain $\chain^*$ at height $h_0$
  by requiring all the corrupted processes $p_i \in
  \mathcal{P}_{\mathsf{maj}}$ to commit old states to $\stakeR$. 
  Since the Byzantine processes control the majority of the resources, 
  the probability of Byzantine processes getting assigned commitment is strictly higher than 
  the probability of correct processes getting assigned commitment; hence,
  the growth rate of $\chain^*$ is strictly higher than the growth
  rate of the honest chain $\lchain$; therefore,  
  $\chain^*$ will eventually catch up and outgrow $\lchain$ in terms of
  the length. 
  
  {More concretely, to simplify our analysis, we also assume the network delay to be $1$ (i.e., $\Delta = 1$) between correct processes.
  We recall that $\hprob$ is the probability that at least one correct process gets selected on the honest chain $\lchain$ at each time step. 
  For any interval $[t_0,t_0+t]$ and arbitrary $t_0, t\in \BN$, 
  we denote with $X_0, \dots, X_{t-1}$ independent Poisson trails such that $\Pr[X_i=1] = \hprob$, 
  and we let $X_H = \sum_{i=0}^{t-1} X_i$.
  Using the Chernoff bound, one can show that for any $\epsilon \in (0,1)$ it holds that $\Pr[X_H < (1-\epsilon)\cdot \hprob\cdot t] \leq \exp(-\hprob \cdot t\cdot \epsilon^2/2)$. 
  Intuitively, the Chernoff bound implies that the value of $X_H$ cannot deviate too much from the mean; hence, for sufficiently large $t$ and  sufficiently small $\epsilon$,
  the upper bound on the honest chain growth is approximately $\hprob\cdot t$, with an overwhelming probability.}
  
  {Using the same argument for the growth of the malicious chain, one can show that
  for a sufficiently large time interval (i.e., $t$) and a sufficiently small $\epsilon$, 
  the lower bound of chain growth is approximately $\aprob \cdot t$ (i.e., $(1+\epsilon)\cdot \aprob \cdot t$) with an overwhelming probability (i.e., $\exp(-\aprob\cdot t \cdot \epsilon^2/3)$), where $\aprob$ is the probability that at least one Byzantine processes get selected on the honest chain $\chain^{*}$ at each time step.}

  {So, if $\aprob > \hprob$, 
  we can claim that $\chain^{*}$ grows at a faster rate than $\lchain$. This is the case for Algorithm~\ref{alg:tob} that uses \stakeR allocator. Due to~\cref{def:shifting}, 
  the probability of getting assigned the resource commitment with $\chain^*$, is $\aprob > 1-(1-\dprob)^{R-R_\CA} = \hprob$, where $\hprob$ is the probability that at least one correct processes get assigned a resource commitment with \lchain.}
  \end{proof}

\begin{remark}
  {Also, if we assume a $\Delta > 1$ network delay between correct processes, there will a non-zero probability that a fork can happen, and honest blocks can get discarded due to
  the network delay. On the other hand, we also assume a perfect synchrony
  ($\Delta=1$) between Byzantine processes; therefore, there is no loss in the
  malicious growth rate. Therefore, 
  even when correct processes and Byzantine processes control the same amount of resources on both chains, due to network delay, the chain growth rate of $\chain^*$ can still be 
  higher than the chain growth rate of the honest chain $\lchain$}.
\end{remark}

\begin{theorem}\label{theo:pos-safety}
  If a virtual-resource-shifting event occurs, a total-order broadcast based on
  virtual resources (Algorithm~\ref{alg:tob}) does not guarantee properties of total-order
  broadcast.
\end{theorem}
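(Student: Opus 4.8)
The plan is to reduce the statement directly to the long-range attack established in Lemma~\ref{lem:lra}. First I would recall that, by Lemma~\ref{lem:lra}, whenever a virtual-resource-shifting event occurs the adversary can \emph{eventually} produce a valid chain $\chain^{*}$ that forks from the honest chain at height $h_0$ and is strictly longer than the local chain $\lchain$ of every correct process. Since Algorithm~\ref{alg:tob} has every correct process adopt the longest valid chain it has seen (L\ref{algo:pox-longest-chain}--L\ref{algo:pox-c-greater-local}), every correct process eventually sets $\lchain \becomes \chain^{*}$, so the \textbf{Safety} clause of Lemma~\ref{lemma:liveness}, namely $\lchain^{t_1}[:-k]\preceq\lchain^{t_2}$ for $t_1\le t_2$, is violated.

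Next I would argue that this reorganization is deep enough to undo already-delivered transactions. Because $h_1$ may be taken arbitrarily large relative to $h_0$ (the adversary need only wait past the shifting event and then keep extending), we may assume $h_1-h_0>k$. Hence, before the switch, every correct process had already run \op{Extend} on $\lchain$ and in particular \op{a-delivered} every transaction in the blocks $\lchain[h_0],\dots,\lchain[h_1-k]$, all of which lay in the common-prefix window $\lchain[:-k]$ at that time. As observed in the discussion preceding Definition~\ref{def:shifting}, the adversary --- now controlling $\mathcal{P}_{\mathsf{maj}}$ --- is free to populate $\chain^{*}$ with different transaction content (e.g.\ omitting a transaction broadcast by a correct process, or omitting the on-chain ``release of resource'' transactions) so that $\chain^{*}$ shares no block with $\lchain$ above height $h_0$.

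It then remains to exhibit a concrete violation of Definition~\ref{def:tob}. I would fix an execution in which some correct process $p_i$ broadcasts a transaction $\tx$ that is included in $\lchain$ and \op{a-delivered} by $p_i$ before the switch, choosing $\tx$ to conflict with a transaction the adversary places in $\chain^{*}$, so that by the validity predicate $\PP$ the transaction $\tx$ can never be re-included on any extension of $\chain^{*}$. After every correct process adopts $\chain^{*}$, none of them ever again has $\tx$ in the prefix $\lchain[:-k]$, hence no correct process other than $p_i$ ever \op{a-delivers} $\tx$; this contradicts the \textbf{Agreement} property. Choosing instead two correct processes whose deliveries straddle the reorganization yields a \textbf{Total order} violation (one delivers the abandoned-suffix transactions, the process adopting $\chain^{*}$ never does), so either way Algorithm~\ref{alg:tob} fails to implement total-order broadcast.

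The main obstacle I anticipate is making the censorship \emph{permanent} rather than transient: Lemma~\ref{lem:lra} only says $\chain^{*}$ eventually overtakes $\lchain$, and once correct processes switch to $\chain^{*}$ their resources also count toward its growth, so one must rule out a subsequent honest fork re-including $\tx$. This is handled by the remark following Lemma~\ref{lem:lra}: the adversary has zero network delay among its corrupted copies (Definition~\ref{def:corruption} and the threat model) while correct processes suffer $\Delta>1$, and by Definition~\ref{def:shifting} the adversary additionally holds the majority of the virtual resource on $\chain^{*}$; therefore $\aprob>\hprob$ on $\chain^{*}$ and the adversary keeps it strictly ahead indefinitely, so any honest block re-including $\tx$ stays on a shorter fork and never enters the adopted common prefix. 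Hence the agreement violation persists, completing the argument.
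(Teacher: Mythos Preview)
Your overall plan---invoke Lemma~\ref{lem:lra} to obtain a longer adversarial chain $\chain^{*}$ and then exhibit a failure of one of the properties in Definition~\ref{def:tob}---is exactly the paper's approach. The divergence is in \emph{how} you extract a concrete violation, and there your argument contains an internal inconsistency that leaves a genuine gap.

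You first assert that ``before the switch, \emph{every} correct process had already \ldots\ \op{a-delivered} every transaction in the blocks $\lchain[h_0],\dots,\lchain[h_1-k]$,'' and then, a few lines later, conclude that ``no correct process other than $p_i$ ever \op{a-delivers} $\tx$'' for a transaction $\tx$ sitting in exactly those blocks. These two claims contradict each other: if every correct process already \op{a-delivered} $\tx$ before adopting $\chain^{*}$, then Agreement for $\tx$ is trivially satisfied, and the subsequent switch to $\chain^{*}$ changes nothing about the \var{delivered} sets with respect to $\tx$. Your censorship-and-permanence machinery (the last paragraph) therefore does not rescue the argument, because the problem is not whether $\tx$ can be re-included later---it is that you have already assumed everyone delivered it. The alternative you sketch (``two correct processes whose deliveries straddle the reorganization'') is the right idea but is not developed; making it precise requires exhibiting a process that has \emph{not} yet delivered $\tx$ at the moment it receives $\chain^{*}$, which is incompatible with your earlier ``every correct process had already \op{a-delivered}'' premise.

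The paper sidesteps this entirely with two much simpler moves. First, the adversary does not broadcast $\chain^{*}$ to all correct processes; it sends $\chain^{*}$ only to a \emph{subset}, so some correct processes adopt $\chain^{*}$ and \op{a-deliver} its deep transactions while others remain on the honest chain---an immediate Total-order violation with no need to reason about censorship or permanence. Second, the paper invokes the permissionless setting: a newly joining correct process will adopt $\chain^{*}$ and hence have a \var{delivered} set disjoint (above $h_0$) from that of the long-running processes. Either observation finishes the proof in one line. Your elaborate analysis of whether the adversary can keep $\chain^{*}$ ahead forever is unnecessary for the theorem as stated, since a single execution in which a property fails suffices.
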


\begin{proof}
Let $\chain$ be the honest chain adopted by every correct process and let us
assume that all the transactions buried at least $k$ blocks deep in $\chain$
have been $\op{a-delivered}$ (Algorithm~\ref{alg:tob},
L\ref{algo:pox-a-deliver}) by every correct process.
  If a virtual-resource-shifting event occurs then, by Lemma~\ref{lem:lra}, an
  adversary can eventually
  form a valid chain $\chain^*$ that is {longer than} $\chain$.

  For existing processes, the adversary can send this $\chain^*$ to a subset of correct processes. 
  This implies that some correct processes will adopt $\chain^*$ as a valid chain; they will $\op{a-deliver}$ all the transactions buried at least $k$ blocks deep in $\chain^*$.
  {This implies that, eventually, the \emph{total-order} property is violated.}
  Also, due to the permissionless nature of our model, correct processes might join the system at any time. Hence, new processes will adopt the malicious chain as the local chain; therefore, \var{delivered} will be different among correct processes. 
  Hence,  the \emph{total-order} property is violated
\end{proof}

\subsection{Incentives in Burnable and Reusable Resources}
\label{sub:incentive}
One of the vulnerabilities induced by \emph{reusable} resources
is that extending the blockchain is \emph{costless} with
respect to the resource considered.
This is different from \emph{burnable} resources, where creating a block consumes
the resource; {this consumption is captured in our model
as the interaction between processes and the resource allocator}.
The use of reusable resources can result in two different types of adversarial behaviors. 
The first one consists in creating multiple blocks at the same time slots on
different chains. 
The second one consists in keeping blocks created private from the rest of the
processes.
In both cases, we discuss how this \emph{costless} property associated with block
creation for longest-chain consensus protocols based on a reusable resource
impacts their security compared to those based on burnable resource.
In this section, we assume, as is traditional with any blockchain system, that
some financial reward is associated with block creation,
and we assume the cost of acquiring
resources is the same for both reusable and burnable resources. With these
assumptions and the use-once property of resource allocator, we define the chain
extension cost as follows.

\begin{definition}[Chain Extension Cost]
\label{def:chain-extension-cost}
{The cost of extending a valid chain for a process $p_i$ between two time steps 
$t_1$ and $t_2$ such that $t_1 \leq t_2$ is defined to be the resource budgets
committed and assigned back during this time interval. In particular, we have:}
\begin{itemize}
  \item 
  {For burnable resources:  
  $\cost_{\mathsf{burn}}(p_i, t_1, t_2) = \sum_{t=t_1}^{t_2}\fun{Alloc}(p_i,t)$}
  \item  {For reusable resources:
  $\cost_{\mathsf{reuse}}(p_i,t_1,t_2) \leq \max_{t \in [t_1, \dots, t_2]}\{\fun{Alloc}(p_i,t)\}$}
\end{itemize}
\end{definition}
\begin{proposition}~\label{prop:cost}
  {For all time step $t_2>t_1$ and a process $p_i$, the cost of extending a valid chain with a burnable resource is strictly more expensive than with a reusable resource, i.e., $\cost_{\mathsf{burn}}(p_i,t_1, t_2)>\cost_{\mathsf{reuse}}(p_i, t_1,t_2)$.}
\end{proposition}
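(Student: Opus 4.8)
The plan is to derive the strict inequality directly from the two closed forms in Definition~\ref{def:chain-extension-cost}, using only the facts that (i) a sum of non-negative integers dominates any single summand, and (ii) a process that is actually extending the chain on the whole interval $[t_1,t_2]$ holds a strictly positive budget at each step of that interval. So the proof is essentially ``$\sum \ge \max + (\text{something positive})$''.

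First I would unfold the definitions. By Definition~\ref{def:chain-extension-cost}, $\cost_{\mathsf{burn}}(p_i,t_1,t_2) = \sum_{t=t_1}^{t_2}\fun{Alloc}(p_i,t)$ is a sum of $t_2-t_1+1$ non-negative integers, while $\cost_{\mathsf{reuse}}(p_i,t_1,t_2) \le \max_{t \in [t_1,\dots,t_2]}\{\fun{Alloc}(p_i,t)\}$. Fix an index $t^\star \in [t_1,\dots,t_2]$ attaining this maximum, so $\cost_{\mathsf{reuse}}(p_i,t_1,t_2) \le \fun{Alloc}(p_i,t^\star)$. Next I would split the burnable sum as $\cost_{\mathsf{burn}}(p_i,t_1,t_2) = \fun{Alloc}(p_i,t^\star) + \sum_{t \in [t_1,\dots,t_2],\, t \ne t^\star} \fun{Alloc}(p_i,t)$, where, since $t_2 > t_1$, the residual index set is non-empty and has size $t_2 - t_1 \ge 1$.

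To turn this into a \emph{strict} inequality I need the residual sum to be at least $1$. This is the step where the model assumptions enter: a process that extends the chain across $[t_1,t_2]$ invokes $\op{\allocator-commit}(p_i,\cdot,r_i)$ with $r_i = \fun{Alloc}(p_i,t)$ at every such step (cf.\ the $\op{Extend}$ routine of Algorithm~\ref{alg:tob}), and in the burnable case this budget is consumed and must be re-committed each step, whereas in the reusable case the \emph{same} budget is re-used and assigned back ($r$ is returned as $\bot$ in $\op{\allocator-assign}$), so it is paid for only once. Thus $\fun{Alloc}(p_i,t) \ge 1$ for every $t \in [t_1,\dots,t_2]$, hence $\sum_{t \ne t^\star}\fun{Alloc}(p_i,t) \ge t_2 - t_1 \ge 1$, and therefore $\cost_{\mathsf{burn}}(p_i,t_1,t_2) \ge \fun{Alloc}(p_i,t^\star) + 1 > \fun{Alloc}(p_i,t^\star) \ge \cost_{\mathsf{reuse}}(p_i,t_1,t_2)$, which is the claim.

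The main obstacle is not the arithmetic but isolating the hypothesis that makes the gap strict, namely that every time step in the interval contributes a positive amount to $\cost_{\mathsf{burn}}$. I would justify this via the algorithm (the process commits its current budget at each $\op{Extend}$ call) together with the use-once property, which guarantees that splitting or withholding budget is never strictly beneficial, so a process genuinely extending throughout commits $\fun{Alloc}(p_i,t) \ge 1$ at each step. If one wishes to admit idle steps with $\fun{Alloc}(p_i,t) = 0$, the proposition should be read as restricted to the sub-interval on which $p_i$ actively extends, and the argument above is unchanged.
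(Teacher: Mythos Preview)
Your argument is correct. The paper itself gives no proof of this proposition at all: it is stated immediately after Definition~\ref{def:chain-extension-cost} and then used, so the authors evidently regard the inequality $\sum_{t=t_1}^{t_2}\fun{Alloc}(p_i,t) > \max_{t\in[t_1,\dots,t_2]}\fun{Alloc}(p_i,t)$ as obvious from the two formulas. Your write-up supplies exactly the elementary justification the paper omits, and you are right to flag that strictness hinges on at least one non-maximizing step having $\fun{Alloc}(p_i,t)\ge 1$; your appeal to the $\op{Extend}$ routine and the burnable semantics is the natural way to secure that in this model.
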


{\cref{prop:cost} indicates that it is inherently more expensive to extend the blockchain for \emph{burnable} resources; hence, it is more difficult to launch different types of attacks on blockchain based on \emph{burnable} resources. In the following, we explain different types of attacks.}

\pparagraph{Private Attack} The private attack~\cite{DBLP:conf/ccs/DemboKTTVWZ20}, sometimes called double-spending
attack, is the most simple attack in longest-chain blockchains.
The adversary creates a private chain, i.e., it mines on its own without
broadcasting its blocks to the other processes and without accepting the blocks
from other processes.
In particular, the adversary runs Algorithm~\ref{alg:tob}, except that it does
not broadcast its blocks until the end of the attack.
This means that two chains grow in parallel: the adversarial one, that only the
adversary is aware of, and the honest one.
The adversary is aware of the honest chain but chooses not to contribute to it
and it wins the attack if it creates a chain longer than the honest chain.
In the case of a \emph{burnable} resource, this attack has a cost as every block
created consumes a resource. If the adversary wins the attack, then the cost is
recovered as the adversary wins the reward associated with block creation.
Otherwise, it loses the cost associated with all the resources consumed.
In the case of a \emph{reusable} resource, the only cost of the attack is the
\emph{opportunity cost}, i.e., the adversary takes the risk of potentially not
earning the rewards associated with block creation if the attack fails but does
not lose any resources. The attack in this case is then much cheaper than in the 
case of a burnable resource.
The cost of a private attack is higher if the resource allocator is based on a
burnable resource than if it is on a reusable resource, thus creating a
stronger disincentivisation for an adversary.
The results follow from the fact that for a reusable resource, the resource
allocator can be invoked 
on the same resource several times. 
From \cref{prop:cost}, it is not difficult to see that the expected return on performing a private attack is higher for a \emph{reusable} resources as the probability of winning the attack (i.e. producing a longer chain) is the same in both cases, but the cost is higher for a \emph{burnable} resource. 

\pparagraph{Resource-bleeding Attack}
Stake-bleeding attacks~\cite{DBLP:conf/cvcbt/GaziKR18} were proposed in the context of \postake blockchains and work, informally, as follows. An adversary starts creating a private chain (i.e., it does not broadcast its blocks to the rest of the network) but, differently from the private attack described previously, the adversary may continue creating blocks on the honest chain. In its private chain, the adversary includes all of the transactions it is aware of, harvesting the associated transaction fees. Furthermore, the adversary also receives the coinbase reward usually associated with block production.
After a sufficient amount of time, the adversary will have bloated its amount of resources and will \emph{eventually} be able to create a chain that becomes as long as the honest chain.
This attack could be extended to the general-resource case, which we call this attack \emph{resource-bleeding attack}, and note that in the case of an external resource, this attack is much easier to detect than in the Proof-of-Stake case.
In order to understand this attack, we must extend the model from Section~\ref{sec:nakamoto_consensus}
to take into account \emph{total resource adjustments} in the case of inactive processes.
In Appendix~\ref{sec:discussion} we describe the general case of resource-bleeding attacks and discuss how they are more detectable on an external resource and the most mitigated for burnable resources.

\pparagraph{Nothing-at-Stake Attack} In a Nothing-at-Stake attack, instead of
deciding to extend the longest chain (Algorithm~\ref{alg:tob},
L\ref{algo:pox-longest-chain}), a process decides to mine simultaneously on all
of the chains it is aware of.
In the case of a burnable resource, an adversary cannot reuse the same resource
to mine on multiple chains (due to L\ref{line:resource-burnt}), hence in order
to mount this attack, the adversary must decide how to commit its resources to
multiple chains. 
In contrast, with a reusable resource, each resource can be fully committed to each chain. If there exists multiple forks of the same length,
there is a risk that a process will mine on a chain that ends up being abandoned and thus will miss out on the associated reward.
It thus becomes rational for a process to deviate from the protocol and mine on every chain since this reduces chance of losing reward because network may select different chain.
If every process adopts this strategy, the protocol cannot achieve the common prefix property as every chain will keep on growing at the same pace.

\section{Discussion}
\label{sec:discussion}
\pparagraph{Resource-bleeding Attack in the Flexible Resource Setting}
The resource-bleeding attack stems from this observation: in order to deal with inactive processes, if the protocol wants to maintain its block production rate, it needs to adjust its leader selection processes such
that inactive processes are not selected anymore. In practice, this means increasing $\prob$ such that every active process has a higher chance of being selected and removing the inactive processes from the list of eligible block producers and hence maintaining a steady block rate.
If an adversary starts a private attack, since no resource commitment from the other processes is included in the adversarial chain, after a sufficient time, $\prob$ will be updated to ensure that the adversarial chain block rate is maintained.
On the other hand, with a reusable resource, the adversary could keep maintaining its resources on the honest chain to ensure that the leader selection probability is not adjusted on the honest chain. 
After enough time, all the honest processes will be removed from the power table in the adversarial chain. This means that when electing a leader (Algorithm~\ref{alg:filecoin-pos}, L\ref{line:post-Foutputsp}) on the adversarial chain, the adversary now represents the full power table and is guaranteed to be elected at each epoch.
On the other hand, since the adversary maintain its resource on the honest chain, without contributing as many blocks as it could. This means that, after some time, the honest chain will grow at a slower rate than the adversarial chain and the adversary will be able to create a chain as long as the honest chain, breaking the safety of the protocol.

In practice in Bitcoin, the \emph{target} value~\cite{bitcointarget} is updated every two weeks (roughly) to ensure that blocks are created, on average, at the same pace. An adversary could fork the chain, wait for the difficulty adjustment to adjust and then be able to create a chain at the same pace as the honest chain. This is, however, easily detectable. In the PoW case, one can simply see that the difficulty has been adjusted and that one chain has much fewer resources than the other.
Moreover, since the resource is burnable, it is not possible for an adversary to continue mining on the honest chain as the same burnable resource cannot be used twice, hence the adversary cannot maintain its full resource on the honest chain and the honest chain difficulty must be adapted accordingly.

For an external, but reusable, resource such as storage, the adversary could maintain its power in both chain, however, it is easy to detect the adversarial chain as
it will have fewer resources committed to it and hence is distinguishable from the honest chain.

\pparagraph{Mitigations against Different Attacks}
In the following, we discuss various mitigations against attacks described 
in~\cref{sec:Sec}.

\textit{Long-range Attacks.} {In practice, many \postake systems deal with long-range attacks by using some form of checkpointing~\cite{DBLP:conf/aft/AzouviDN20,DBLP:journals/iacr/TasTGKMY22,DBLP:journals/corr/abs-2109-03913}, 
requiring key-evolving cryptography~\cite{DBLP:conf/eurocrypt/DavidGKR18,DBLP:conf/sosp/GiladHMVZ17}, 
or using multiple types of resources~\cite{DBLP:conf/ccs/FitziWKKLVW22}. 
Others use more refined chain selection rules~\cite{DBLP:conf/eurocrypt/DavidGKR18,DBLP:conf/ccs/BadertscherGKRZ18} (i.e., chain density analysis or selecting the longest chain that fork less than $k$ blocks) instead of the longest chain selection.}

\textit{Resource-bleeding Attacks.}
In the case of \postake, mitigation has been proposed in Ouroboros Genesis~\cite{DBLP:conf/ccs/BadertscherGKRZ18} and it works as follows. When a process is presented with two forks, it differentiates between two cases. In the first case, the fork is smaller than the common prefix parameter $k$, i.e., the two chains differ for a number of slots smaller than $k$, in which case the usual longest-chain rule is applied. If on the other hand, the forks differ from more than $k$ slots, then the processes look at the first $k$ slots after the fork (i.e., the first $k$ slots where the two chains diverge) and choose the chain with the most blocks in that period. Intuitively, this is because during the beginning of the fork, an adversary has not had the time to bloat its stake and hence the rate at which its chain grows will be smaller than that of the correct processes.
In the case of an external resource, it suffices to look at the total power (which can be explicit in the case of a reusable resource, or implicit for a burnable resource, e.g., \emph{target} value) at the tip (end) of the chains and pick the one with the most resource.

\textit{Nothing-at-stake Attacks.}
A process that performs a nothing-at-stake attack with a reusable resource is
easily detectable as anyone can see that the same resource was used on
different chains. 
One typical mitigation adopted by
\postake systems is to \emph{slash}, i.e., financially punish, processes who
use their resource on concurrent chains. This is usually done by having
processes deposit some money before gaining participation rights, and then
burning some of this deposit if a proof of misbehavior is sent to the
blockchain. The details of this mechanism are out of scope for this paper.

\section{Conclusion}\label{sec:conclusion}
Resources are essential in ensuring the safety property of total-order
broadcast protocols in a permissionless setting as it protects the protocol from Sybil attacks. 
However, there exist several attacks on protocols based on reusable and virtual resources that 
a formal specification would help understand and address.
  
{In this work, we formalize properties of resources through a
\emph{resource allocator} abstraction, and identify crucial properties on
how to make this resource allocator secure for blockchain protocols. 
Using a secure resource allocator, we demonstrate how to construct a generic
\emph{longest-chain} \emph{total-order} broadcast algorithm. 
Furthermore, we also illustrate how certain types of resources tend to make blockchain protocols 
more vulnerable to different types of attacks. 
We believe that this formalization
will help blockchain protocol designers to select suitable types of resources for
their protocols and understand and analyze the potential security trade-offs on those resources.  
}

\pparagraph{Outlook} 
{For future work, we find the following research
directions worth investigating:} 
\begin{itemize}
    \item {\textbf{Relaxed Assumptions.} Our analysis works with a setting where the total amount of active resources is known and fixed. Hence, it is natural to extend this model to a setting where the total amount of resources is unknown and potentially fluctuates.} 
    \item \textbf{Different Network Setting and Participation Models.} Our model focuses on \emph{probabilistic} \emph{longest-chain} protocols in a $\Delta$-synchrony setting. However, we believe that our model can be applied to analyze properties of resource-based \emph{deterministic} protocols in a permissioned and partially synchrony setting such as Tendermint~\cite{DBLP:journals/corr/abs-1807-04938} and HotStuff~\cite{DBLP:conf/podc/YinMRGA19}.
    \item {\textbf{Different Types of Resources.} Finally, there are other resource-based protocols such as the Proof-of-Elapsed-Time~(PoET) protocol~\cite{DBLP:conf/indocrypt/BowmanDMM21} or multi-resources-based protocol~\cite{DBLP:conf/ccs/FitziWKKLVW22} that have not been considered in this work. 
    Hence, one can extend this model to analyze those protocols.}
\end{itemize}

\section*{Acknowledgments}

The authors thank anonymous reviewers for helpful feedback.
DVL has been supported by a grant from Protocol Labs to the University of
Bern.  LZ has been supported by the Swiss National Science Foundation (SNSF)
under grant agreement Nr\@.~200021\_188443 (Advanced Consensus Protocols).

\bibliography{dblpbibtex, references}
\end{document}